\newcommand{\1}{\mathbbm{1}}
\newcommand{\R}{\mathbb{R}}
\newcommand{\E}{\mathbb{E}}
\newcommand{\p}{\mathbb{P}}
 \newtheorem{lemma}{Lemma}
\newtheorem{theorem}{Theorem}
\newtheorem{remark}{Remark}%
\begin{document}

\title[General Lotto Games with Scouts: Information versus Strength]{General Lotto Games with Scouts: Information versus Strength}


\author*[1,2]{\fnm{Jan-Tino} \sur{Brethouwer}}\email{J.F.Brethouwer@tudelft.nl}

\author[1,2]{\fnm{Bart} \sur{van Ginkel}}

\author[2]{\fnm{Roy} \sur{Lindelauf}}

\affil[1]{\orgdiv{Institute of Applied Mathematics}, \orgname{Delft University of Technology}, \orgaddress{\city{Delft}, \postcode{2628 CD}, \country{The Netherlands}}}

\affil[2]{\orgdiv{Faculty of Military Sciences}, \orgname{Netherlands Defence Academy}, \orgaddress{\city{Breda}, \postcode{4818 CR}, \country{The Netherlands}}}

\abstract{We introduce General Lotto games with Scouts: a General Lotto game with asymmetric information. There are two players, Red and Blue, who both allocate resources to a field. However, scouting capabilities afford Blue to gain information, with some probability, on the number of Red's resources before allocating his own. We derive optimal strategies for this game in the case of a single field. In addition we provide upper and lower bounds of the value of the game in a multi-stage case with multiple battlefields.
We devise several ways to characterise the influence of information versus strength. We conclude by drawing qualitative insights from these characterisations and the game values, and draw parallels with military practice. }

\keywords{General Lotto, Colonel Blotto, Game Theory, Asymmetric information}



\maketitle

\section{Introduction}
The Colonel Blotto game in its most basic form was introduced by \cite{borel1921theorie}. The main idea is that two opposing Colonels simultaneously allocate a fixed amount of resources over a finite number of fields. Each field is won by whichever player sends the most resources there, and the goal is to win as many fields as possible. The Colonel Blotto game and its variations and generalisations have been used to model various strategic situations, ranging from division of troops and the placement of missiles (\cite{eckler1972mathematical}) to budget allocation during elections (\cite{washburn2013or,laslier2002distributive}).\\
\\
Borel introduced  two versions of the Colonel Blotto game in his original paper: the discrete Colonel Blotto game, where the number of resources sent to a field has to be an integer, and the continuous Colonel Blotto game, where the number of resources can be any non-negative number.  The first solutions to the continuous Colonel Blotto game were found by \cite{gross1950continuous}. They solved the game completely for 2 fields. They also provided solutions for 3 or more fields, under the condition that both sides have an equal number of resources. The continuous Blotto game was eventually completely solved by \cite{roberson2006colonel}, allowing for any number of fields and asymmetric resource numbers between players. The discrete version turned out to be quite difficult to solve, as even now it is considered very computationally intensive to find solutions. There is a substantial literature on the computational complexity of the discrete Blotto game, see \cite{behnezhad2017faster} and the references therein. Even so, \cite{liang2023discrete} showed that the equilibrium payoff for two-battlefield games is generally the same in the continuous version as in the discrete version.\\ 
\\ 
\cite{hart2008discrete} introduced a more computable variant of Colonel Blotto games, which he called General Lotto games. These games relax the budget constraint on the number of resources allocated to a field, only requiring it to hold in expectation. 
Hart showed how these games could be used to find optimal strategies for symmetric discrete Colonel Blotto games. While General Lotto games were introduced as a tool to study Colonel Blotto games, they have shown to be interesting games in their own right.\\ 
\\
In standard General Lotto games, the resources allocated to each of the fields are revealed simultaneously. In practice, however, this need not always be the case. Intelligence, surveillance and reconnaissance efforts might unveil adversary troop allocations beforehand. A reconnaissance drone or satellite image could identify troops en route to the field, a spy might uncover the number of missiles in a specific location, or a leaked document could detail the budget allocated to a certain state during elections. Knowledge of such events significantly influences the strategies of both players.\\
\\
In this paper, we present a new variation of the General Lotto game that accommodates such information. We consider the following variation of a General Lotto game. Consider two players, Blue and Red. First, Red allocates a number of resources to the field, as done in a standard General Lotto game. Next, Blue may obtain knowledge of Red's resource allocation, i.e., with some probability Blue learns the exact number of resources Red has allocated. Blue can now choose to match this number and win the corresponding field, or allocate no resources to the field. Since it is a General Lotto game, the number of resources allocated by the players has to match their budget in expectation.\\
\\
Our research is inspired by the real-world scenario of a scouting drone spotting enemy troops on their way to a field, which is a problem that was provided to us by the Dutch ministy of Defense. This setting has since then only grown in relevance, with the war in Ukraine using surveillance drones on a large scale, making troop movements more transparent than ever (\cite{gady2023how}). We abstracted this to a problem of information versus strength with fixed budgets.\\
%
%
\\
The goal of this paper is to study the effect of information asymmetry on resource allocation. Furthermore, we want to gain insights in optimal configurations of different types of resources: resources that yield strength (troops) or information (drones).\\
We focus on two main research questions. 
First, we assume that the amount of information and strength is predetermined. 
With the information that is provided, what are the optimal strategies? Specifically, how does Blue use his information efficiently and how does Red change her strategy to deal with Blue possibly gaining information? We provide several insights and guidelines to answer to these questions.\\
\\
The second research question is a generalisation of the first: What if you are allowed to choose your own configuration of information and strength under a given budget? This is the so-called weapons mix problem (\cite{jaiswal2012military}). You are about to face a series of battles, and you have to assemble your troops. Given is a fixed budget to divide between scouting drones which provide information, and weaponized unmanned vehicles which provide firepower. What is the mix of information and strength that maximises expected wins?



\subsubsection*{Relevant literature}

General Lotto games are a valuable framework for exploring information and asymmetry in resource allocation. There are many aspects of the game where there can exist asymmetry: players may have different field valuations, resources could vary in strength between players, or a player might have a strategic advantage, such as in our game. \\
\\
Several extensions of General Lotto Games have been developed to investigate the effects of these asymmetries. For example, in \cite{kovenock2021generalizations} the players have asymmetric valuations for the fields, which are known to both players. Another approach is in \cite{paarporn2022asymmetric}, where the valuation of fields is the same for both players, but only one player knows the exact value, and the other player only knows its distribution.\\
\\
\cite{vu2021colonel} provided solutions for General Lotto games with favouritism, where some of the resources are pre-allocated beforehand, and remaining resources have differing strengths per field and player. The number of resources pre-allocated for each field is fixed.
\cite{chandan2022strategic} builds upon this idea, by including pre-allocation in the strategy. They extend General Lotto games to two-stage games, where in the first phase a fixed budget of resources must be pre-allocated over fields, but the player now has freedom to choose how to divide these resources. In the second stage, these resources are revealed, and the General Lotto game with favouritism is played.\\
\\
An alternate extension of General Lotto games are Winner-Take-All games, as studied by \cite{alpern2017winner}. Where General Lotto games have two players competing, Winner-Take-All games allow for any number of players. Each player is tasked with selecting a distribution following predefined constraints. Subsequently, scores are sampled from these distributions, and the player achieving the highest score wins. Notably, two player Winner-Take-All games with constraints on the expectation of the distribution are equivalent to General Lotto games. 

\subsubsection*{Main contributions and results}
We first study General Lotto games with Scouts on a single field. We show that based on the amount of resources and the probability that the field is revealed, the analysis of this game can be split into three different cases. We provide optimal strategies for all three. We also study a multi-stage version of the game with multiple fields, which all have their own value and own probability to be revealed. For this version, we provide upper and lower bounds on the value of the game. These bounds are tight in some settings. We devise several ways to compare the value of information and resources. We conclude with several qualitative insights. 
  
\subsubsection*{Overview of the paper}
We first introduce the game model and its notation in Section \ref{sec:model}. We focus on single field General Lotto games with Scouts in Section \ref{sec:singlefield}, where we provide optimal strategies for both players.
In Section \ref{sec:multifield} we move on to a multi-stage version with multiple fields and provide upper and lower bounds on the value of this game. In Section \ref{sec:Measuring}, we introduce methods to measure information versus strength. We discuss the insights gained about the value of information in Section \ref{sec:interpretation}. We conclude and discuss future work in Section \ref{sec:conclusion}.

\section{Model, notation and assumptions} \label{sec:model}
As described in the introduction, General Lotto games as introduced by \cite{hart2008discrete} are a relaxed version of Colonel Blotto games, in the sense that the resource constraint should only hold in expectation, rather than with probability $1$. We want to study a version of the General Lotto game where one player receives information on the resource configuration chosen by the other player. This act of receiving information is symbolised by the idea of sending a scout. The game is therefore denoted as $GL\text{-}S$, meaning General Lotto with Scouts. In this section we first formally introduce the General Lotto game and its optimal strategies. Afterwards we generalise it to General Lotto with Scouts on a single field, which exhibits insightful properties. In later sections we analyse a multi-stage version with multiple fields, which will be introduced in the relevant section. 

\subsection{General Lotto}
Before introducing General Lotto with Scouts, we first define the General Lotto game. This game involves two players, Red and Blue, competing on a single field. Blue is endowed with resource budget $B>0$, and Red possesses a resource budget $R>0$. Red chooses as her strategy a (distribution of a) nonnegative random variable X with $\E(X) = R$, Blue chooses as his strategy a (distribution of a) nonnegative random variable $Y$ with $E(Y) = B$.\\
\\
To determine the winner, two numbers, $r$ and $b$, representing the amount of resources allocated to the field by Red and Blue respectively, are sampled from their strategies. The player with the highest amount of allocated resources wins the game, with ties being resolved in favor of Blue. I.e. the game is constant-sum with the following pay-off for Blue:
\begin{align*}
    P(b,r) = \1(b\geq r).
\end{align*}
The way in which ties are broken is chosen to avoid technical complications and does not influence the value of the game. We define the value of the game as the probability Blue wins: 
\begin{align*}
    V = P(X \leq Y )
\end{align*}
This game was solved by \cite{hart2008discrete}. He proved the value of the game and the optimal strategies were as in Theorem \ref{thm:Lotto_hart}.\\

\begin{theorem}[Hart]  \label{thm:Lotto_hart}
Let $R \geq B > 0$. The value $V$ of the General Lotto game is
\begin{align*}
    V = \frac{B}{2R}
\end{align*}
and the unique optimal strategies are $X^*$ for Red and $Y^*$ for Blue, where
\begin{align*}
    X^* &\sim \mathbb{U}[0,2R] \\
    Y^* &\sim \mathbb{U}[0,2R] \text{ with prob. } \frac{B}{R} \text{, else } 0
\end{align*} 
\end{theorem}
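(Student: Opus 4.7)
The plan is to establish the two standard inequalities that pin down the value: that against $X^*$ no Blue strategy achieves more than $B/(2R)$, and that against $Y^*$ no Red strategy drives the win probability below $B/(2R)$. Combined with the trivial checks that $\E[X^*] = R$ and $\E[Y^*] = B$, this identifies $V = B/(2R)$ and certifies $(X^*, Y^*)$ as a saddle point.

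For the upper bound I would condition on Blue's allocation under an arbitrary feasible strategy $Y$. Because $X^* \sim \mathbb{U}[0, 2R]$, its CDF satisfies $P(X^* \leq y) \leq y/(2R)$ for every $y \geq 0$ (equality on $[0, 2R]$, trivial for $y > 2R$). Integrating against the law of $Y$ then yields $P(X^* \leq Y) \leq \E[Y]/(2R) = B/(2R)$.

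For the lower bound I would decompose $Y^* = \xi U$ where $U \sim \mathbb{U}[0, 2R]$ and $\xi$ is an independent Bernoulli with $P(\xi = 1) = B/R$. Conditioning on $\xi$ and discarding the nonnegative contribution from $\{\xi = 0\}$ gives, for any Red strategy $X$ with $\E[X] = R$,
\begin{align*}
P(X \leq Y^*) \;\geq\; \frac{B}{R}\, P(X \leq U) \;=\; \frac{B}{R}\, \E\!\left[\left(1 - \tfrac{X}{2R}\right)^{+}\right] \;\geq\; \frac{B}{R}\!\left(1 - \frac{\E[X]}{2R}\right) \;=\; \frac{B}{2R},
\end{align*}
where the last inequality uses $(1-t)^{+} \geq 1-t$ and the final equality uses $\E[X] = R$.

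The main obstacle is uniqueness, which requires more than the equality cases in the two inequalities above. I would argue that any optimal $X$ must satisfy $F_X(y) \leq y/(2R)$ for every $y \geq 0$: otherwise Blue can exploit a violation by placing a point mass at a well-chosen location together with an atom at $0$, pushing the win probability strictly above $B/(2R)$. Combined with $\E[X] = R = \int_0^{\infty}(1 - F_X(y))\, dy$ and a supporting-line argument that rules out alternative upper envelopes, this forces $F_X(y) = y/(2R)$ on $[0, 2R]$, hence $X = X^*$. A symmetric analysis identifies $Y^*$ as Blue's unique optimum, with the atom at $0$ arising naturally from the mean constraint whenever $R > B$.
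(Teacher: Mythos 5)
The paper does not actually prove this theorem: it is quoted from Hart (2008), so there is no in-paper proof to compare against. That said, your two bounds are correct and are essentially the argument the paper uses for its own Lemmas~1--3 (the ``Claim 1''/``Claim 2'' structure, conditioning on $Y$ resp.\ $X$, and the step $\E\left[(1-X/(2R))^{+}\right] \ge 1 - \E[X]/(2R)$); specialising the proof of Lemma~2 to $u=0$ recovers your computation almost verbatim. So the value $B/(2R)$ and the optimality of $(X^*,Y^*)$ are established by your argument.

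The genuine gap is uniqueness, which you flag but do not close. Your proposed deviation --- an atom at a point $y_0$ where $F_X(y_0) > y_0/(2R)$, padded with an atom at $0$ --- satisfies Blue's budget $\E[Y]=B$ only when $y_0 \ge B$; for $y_0 < B$ the natural repair (mixing $y_0$ with a far-away mass point) yields a win probability close to $F_X(y_0)$, which can lie strictly between $y_0/(2R)$ and $B/(2R)$, so no contradiction results. You therefore only obtain $F_X(y) \le y/(2R)$ for $y \ge B$, and combining this with $\int_0^\infty (1-F_X(y))\,dy = R$ does not by itself force $F_X(y)=y/(2R)$ on all of $[0,2R]$. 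The clean route is to note that Blue's best-response value against $X$ equals the least concave majorant of $F_X$ evaluated at $B$ and to argue about that majorant; the ``supporting-line argument'' you allude to is precisely where the real work lives and is not supplied. Likewise, ``a symmetric analysis'' for Blue is not literally symmetric, since $Y^*$ carries an atom at $0$ while $X^*$ does not, so the exploitation construction on Red's side must be different. Since the paper itself cites Hart rather than proving the theorem, the pragmatic fix is to keep your (correct) verification of the value and the saddle point, and cite Hart for uniqueness or supply the concave-majorant argument in full.
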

\subsection{General Lotto with Scouts}
We now consider General Lotto games with Scouts, denoted as $GL\text{-}S(B,R,u)$. Similar to the standard General Lotto game, $GL-S$ involves the two players Blue and Red, with resource budgets $B,R>0$, competing on a single field. However, in this variant, Blue obtains information insight into Red's resource allocation with a probability $u\in [0,1]$. This influences how both players approach the game.\\ 
\\
To describe the feasible strategies, we begin with an assumption and an observation.
\begin{itemize}
    \item First, we assume that both players choose their strategies before knowing whether the number of Red resources will be revealed. This means that the strategy of Blue consists of two parts: actions in case the field is revealed, and actions for when it is not. The expectation over these two together has to match Blue's total resources $B$. 
    \item Second, note that if the number of resources allocated by Red is revealed, Blue either allocates the exact same amount of resources (Call) or allocates no resources at all (Fold). It is clear that any other course of action is less beneficial for Blue, i.e. is dominated by this strategy.
\end{itemize}
A strategy for Red, denoted by $\sigma_R = (X)$, is a random variable $X$ on $[0,\infty)$ that satisfies:
\begin{align*}
    \E (X) \leq R.
\end{align*}
The space of all Reds strategies is denoted by $\Sigma_R$. We denote the number of resources allocated by Red by $T_R(X)$.\\
\\
A strategy for Blue, denoted by $\sigma_B = (t,Z)$, consists of a function $t: [0,\infty) \to [0,1]$ and a random variable $Z$ on $[0,\infty)$. Here $t(x)$ is defined as the probability with which Blue calls if it is revealed that Red has allocated $x$ resources and $Z$ determines the number of resources that Blue allocates if the field is not revealed. We denote the number of resources allocated by Blue by $T_B(t,Z,X)$. Now we see that
\begin{align*}
T_B(t,Z,X) =
    \begin{cases}
    X \text{ with prob. $t(X)$, else 0\hspace{5mm}}  & \text{if the field is revealed} \\
    Z           ,& \text{if the field is not revealed }
\end{cases}
\end{align*}
Given that Red plays strategy $\sigma_R = (X)$ and Blue plays strategy $\sigma_B = (t,Z)$, we see that
\begin{align*}
    \E T_B(t,Z,X) = u\E \left(t(X)X \right) +  (1-u)\E (Z).
\end{align*}
Since the Blue resource constraint must hold for any strategy of Red, we require that
\begin{align*}
   u\E \left(t(X)X \right) +  (1-u)\E (Z) = \E T_B(t,Z,X) \leq B
\end{align*}
for all random variables $X$ on $[0,\infty)$ that satisfy $\E (X)\leq R$. The space of all such strategies of $B$ is denoted by $\Sigma_B$.\\
\\
The value of the game is the expected payoff to Blue.
Given that Blue plays strategy $\sigma_B = (t,Z)$ and Red plays strategy $\sigma_R = (X)$, the value of the game $V$ is given by
\begin{align*}
     V = P(\sigma_B,\sigma_R) = u \E(t(X)) + (1-u)\p(Z \geq X).
\end{align*}
Blue want to find a strategy which maximises the value, and Red want to find a strategy which minimises the value.
\section{General Lotto with Scouts: Single Field}\label{sec:singlefield}
In the last section we introduced the $GL\text{-}S$ game. In this section we find its value and optimal strategies. The main result is given in Theorem~\ref{thm:GL-S} at the end of the section.\\
\\
Based on the resource ratio $B/R$, we split the analysis into the following three cases:
\begin{itemize}
    \item Case 1: $GL\text{-}S$ with $1 \leq B/R $
    \item Case 2: $GL\text{-}S$ with $u \leq B/R \leq 1$
    \item Case 3: $GL\text{-}S$ with $B/R \leq u$
\end{itemize}
The cases have different optimal strategies and are therefore treated separately. At the end of the section, the results are combined into Theorem~\ref{thm:GL-S}. Every time we propose two strategies and show that they both guarantee the same payoff $V$. Thus these strategies must be optimal and the value of the game is $V$.
 
\subsection{Case 1: \texorpdfstring{$GL\text{-}S$}{GL-S} with \texorpdfstring{$1 \leq B/R $}{1 < B/R}}
We introduce two new parameters $q(u)$ and $C(u)$:
\begin{align*}
    q(u) &= \frac{1-u}{\left( \frac{B}{R}\right) -u}\\
    C(u) &= \frac{R}{q(u)} = R \left(\frac{\frac{B}{R}-u}{1-u}\right) = B + u \frac{B-R}{1-u}
\end{align*}
Since $B \geq R$, it follows that $0\leq q(u) \leq 1$ and $C(u)\geq B \geq R$.\\
\\
We show that $\sigma_R^* = (X^*)$ and $\sigma_R^* = (t^*,Z^*)$ are optimal strategies, where:
\begin{align*}
    X^* &\sim \mathbb{U}[0,2C(u)] \text{ with prob. } q(u) \text{, else } 0\\
    t^* &= 1 \quad \forall x \\
    Z^* &\sim \mathbb{U}[0,2C(u)]
\end{align*}
$\mathbb{U}[a,b]$ denotes the uniform distribution with support $[a,b]$. This strategy consists of Blue always replicating Reds allocation if the field is revealed, or Blue plays a uniform distribution if the field is not revealed. Red plays the same uniform distribution with probability $q(u)$, and 0 with probability $1-q(u)$.\\

\begin{lemma} \label{lem:case1}
These strategies are optimal strategies in the case $\frac{B}{R} \geq 1$, and the value of the game equals $V=1-\frac{(1-u)^2}{2(\frac{B}{R}-u)}$.
\end{lemma}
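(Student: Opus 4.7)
The overall plan is to exhibit the value $V^* := 1 - \frac{(1-u)q}{2}$, which equals the stated $1 - \frac{(1-u)^2}{2(B/R - u)}$ after substituting the definition of $q$, via two one-sided inequalities: the strategy $X^*$ caps Blue's payoff at $V^*$ against any feasible $(t,Z)$, while $(t^*, Z^*)$ guarantees Blue at least $V^*$ against any feasible $X$. Sandwiching then yields optimality and $V = V^*$. Feasibility of both strategies is routine: $\E X^* = qC = R$ is direct, and any $X$ with $\E X \leq R$ satisfies $u\E(t^*(X)X) + (1-u)\E Z^* = u\E X + (1-u)C \leq uR + (1-u)C = B$, using $C = (B-uR)/(1-u)$. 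The case assumption $B \geq R$ (together with $u \leq 1$) forces $q \in (0,1]$ and $C \geq R$, facts I rely on below.

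The lower bound is a short computation. If Blue plays $(t^*, Z^*)$ and Red plays any $X$ with $\E X \leq R$, then $V = u + (1-u)\p(Z^* \geq X)$; since $Z^* \sim \mathbb{U}[0,2C]$,
\begin{align*}
\p(Z^* \geq X) = 1 - \frac{\E[\min(X, 2C)]}{2C} \geq 1 - \frac{\E X}{2C} \geq 1 - \frac{R}{2C} = 1 - \frac{q}{2},
\end{align*}
yielding $V \geq u + (1-u)(1 - q/2) = V^*$.

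For the upper bound, fix Red playing $X^*$ and let $(t, Z)$ be any feasible Blue strategy. Splitting $X^*$ into its atom at $0$ (probability $1-q$) and its uniform part $U \sim \mathbb{U}[0,2C]$ (probability $q$), and bounding $t(0) \leq 1$ together with $\p(Z \geq 0) = 1$, I get
\begin{align*}
V \leq (1-q) + q\bigl[u\E t(U) + (1-u)\p(Z \geq U)\bigr].
\end{align*}
Hence it suffices to establish $u\E t(U) + (1-u)\p(Z \geq U) \leq (1+u)/2$.

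This last inequality is the main obstacle and is where the case hypothesis is used. The approach is to combine two ingredients: conditioning on $Z$ gives $\p(Z \geq U) = \E[\min(Z, 2C)]/(2C) \leq \E Z / (2C)$; and the feasibility constraint applied specifically to $X = X^*$ reads $uq\E(t(U)U) + (1-u)\E Z \leq B$. Setting $\psi(x) := 1 - t(x) \geq 0$, so that $\E(t(U)U) = C - \E(\psi(U)U)$, and exploiting the identity $B - uqC = (1-u)C$, the target inequality simplifies after elementary algebra to
\begin{align*}
\int_0^{2C} \psi(x)\, dx \;\geq\; \frac{q}{2C} \int_0^{2C} \psi(x)\, x\, dx,
\end{align*}
which is immediate from $x \leq 2C$ on the domain of integration together with $q \leq 1$. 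The crux of the argument is thus the tight interplay between the support bound on $U$ and the case assumption $B/R \geq 1$ (equivalently $q \leq 1$); the $u=0$ subcase collapses to the bound $\p(Z \geq U) \leq 1/2$ and is handled directly.
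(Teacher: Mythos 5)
Your proof is correct and follows essentially the same route as the paper's: the same two-sided sandwich with the same strategies, the same conditioning on $Z$ to get $\p(Z \geq U) \leq \E Z/(2C)$, the same substitution of the budget constraint, and a key inequality ($\int_0^{2C}\psi \geq \tfrac{q}{2C}\int_0^{2C}\psi(x)\,x\,dx$) that is just the paper's pointwise observation $1 - \tfrac{qX^*}{2C}\geq 0$ in integrated form. The only differences are cosmetic (splitting off the atom of $X^*$ at $0$ before estimating, and writing $\psi = 1-t$).
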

\begin{proof}
First we show that the resource constraints are satisfied:
\begin{eqnarray*}
    \E T_R(X^*) &=& \E (X^*) = q(u) \cdot C(u) + (1-q(u)) \cdot 0 = q(u) \cdot \frac{R}{q(u)} = R \\
    \E T_B(t^*,Z^*,X) &=& u\E \left(t^*(X) X \right) +  (1-u)\E (Z^*) = u \cdot R + (1-u) \cdot C(u) =B,
\end{eqnarray*}
where $X$ is an arbitrary non-negative random variable such that $\E X = R$.\\
Now it suffices to prove the following two claims.\\
\\
\textbf{Claim 1: given that Red plays $\sigma_R^*$, the payoff is at most $V$.}\\
If Red plays $\sigma_R^*$, the payoff for any $\sigma_B$ is:
\begin{align*}
    P(\sigma_B,\sigma_R^*) 
    &= u \E(t(X^*)) + (1-u)\E \left( \1[Z \geq X^*] \right)
\end{align*}
 We can rewrite $\E \left( \1[Z \geq X^*]\right)$ by conditioning on $Z$:
\begin{equation*}
    \E \left( \1[Z \geq X^*]\right)
    = \E \left( \E \left( \1[Z \geq X^*]|Z \right) \right)\\
    = 1-q(u) + q(u)\E \left( \E \left( \1[Z \geq U ]|Z \right) \right)
\end{equation*}
    Where $U \sim \mathbb{U}[0,2C(u)]$. We can use this to rewrite as follows:
    \begin{align*}
        \E \left( \1[Z \geq U ]|Z \right) &= \left( \frac{Z}{2C(u)}\right)\1[2C(u) \geq Z] + 1 \cdot \1[2C(u)<Z]\\
        &= \left( \frac{Z}{2C(u)}\right)\left(1-\1[2C(u) < Z]\right) + 1 \cdot \1[2C(u)<Z]\\
        &= \left( \frac{Z}{2C(u)}\right)- \left( \frac{Z}{2C(u)}-1\right) \cdot \1[2C(u)<Z]
    \end{align*}
    Plugging this back in gives:
\begin{align*}
    \E \left( \1[Z \geq X^*]\right)
    &= 1-q(u) + q(u)\left(\E \left( \frac{Z}{2C(u)}\right) -   \E \left( \left( \frac{Z}{2C(u)}-1\right) \cdot \1[2C(u)<Z] \right) \right)\\
    &= 1-q(u) + q(u)\left(\frac{\E(Z)}{2C(u)} - \E \left( \left( \frac{Z}{2C(u)}-1\right) \cdot \1[2C(u)<Z] \right) \right)\\
    &= 1-q(u) \left(1-\frac{\E(Z)}{2C(u)}\right) - q(u)\E \left( \left( \frac{Z}{2C(u)}-1\right) \cdot \1[2C(u)<Z] \right)\\
    &\leq 1-q(u) \left(1-\frac{\E(Z)}{2C(u)}\right)
\end{align*}
Using this equation, we can rewrite the payoff as:
\begin{align*}
    P(\sigma_B,\sigma_R^*) 
    &\leq u\E(t(X^*)) + (1-u)\left(1-q(u) \left(1-\frac{\E(Z)}{2C(u)}\right)\right)\\
    &= u\E(t(X^*)) + (1-u)(1-q(u))+\frac{q(u)}{2C(u)}\left( B- u\E \left(t(X^*) X^* \right)\right)\\
    &= u\E\left(t(X^*)\left(1-\frac{q(u)X^*}{2C(u)}\right)\right)+ (1-u)(1-q(u))+\frac{q(u)B}{2C(u)}
\end{align*}
Since $0\leq \left(1-\frac{q(u)X^*}{2C(u)}\right)$, this expression is maximised by taking $t(x)=1 \forall x$. This yields
\begin{align*}
    P(\sigma_B,\sigma_R^*) 
    &\leq u\E\left(1-\frac{q(u)X^*}{2C(u)}\right)+ (1-u)(1-q(u))+\frac{q(u)B}{2C(u)}\\
    &= 1 + u - u -q(u)(1-u) + \frac{q(u)}{2C(u)}\left( B-uR\right)\\
    &= 1-q(u)(1-u) +\frac{q(u)}{2}  \frac{B-uR}{C(u)}\\
    &= 1-\frac{1}{2}q(u)(1-u) = 1-\frac{(1-u)^2}{2(\frac{B}{R}-u)} =V
\end{align*}
This proves Claim 1.\\
\\
\textbf{Claim 2: given Blue plays $\sigma_B^*$, the payoff is at least $V$.}\\
If Blue uses the strategy $\sigma_B^*$, the payoff for any $\sigma_R$ is:
\begin{align*}
    P(\sigma_B^*,\sigma_R) 
    &= u \E(t^*(X)) + (1-u)\E \left( \1[Z^* \geq X] \right) \\
    &= u \E(1) + (1-u)\E \left( \1[Z^* \geq X] \right) \\
    &= u + (1-u) \E \left( \1[Z^* \geq X] \right)
    \end{align*}
    We can rewrite $\E \left( \1[Z^* \geq X] \right)$ by conditioning on $X$:
    \begin{align*}
        \E \left( \1[Z^* \geq X] \right)
        &= \E \left( \E \left( \1[Z^* \geq X] | X\right) \right)
         \end{align*}
        Where $Z^* \sim \mathbb{U}[0,2C(u)]$. We can use this to rewrite as follows:
        \begin{align*}
        \E \left( \E \left( \1[Z^* \geq X] | X\right) \right)
        &= \E \left(\left( 1-\frac{X}{2C(u)}\right)\1[2C(u) \geq X] + 0 \cdot \1[2C(u)<X]  \right)\\
        &= \E \left(\left( 1-\frac{X}{2C(u)}\right)\left( 1- \1[2C(u)<X] \right) \right)\\
        &= \E \left( 1-\frac{X}{2C(u)}\right) + \E \left( \1[2C(u)<X]\left( \frac{X}{2C(u)}-1\right) \right)\\
        &= 1 - \frac{R}{2C(u)} + \E \left( \1[2C(u)<X]\left( \frac{X}{2C(u)}-1\right) \right)\\
        &\geq 1 - \frac{R}{2C(u)} = 1 - \frac{q(u)}{2}
    \end{align*}
The last step follows from the fact that the expectation on the RHS is non-negative.
Plugging this into the original equation for payoff gives:
\begin{align*}
     P(\sigma_B^*,\sigma_R) 
     &\geq  u + (1-u) \left(1 - \frac{q(u)}{2}\right) = 1 -(1-u)\frac{q(u)}{2}\\
     &= 1-\frac{(1-u)^2}{2(\frac{B}{R}-u)} = V
\end{align*}
This proves Claim 2 and therefore completes the proof.
\end{proof}

\subsection{Case 2: \texorpdfstring{$GL\text{-}S$}{GL-S} with \texorpdfstring{$u \leq B/R \leq 1$}{u < B/R < 1}}
This is the most interesting of the three scenario's. Blue is outnumbered, but can use information to be competitive. 
We introduce the following parameter:
\begin{align*}
    p(u) = \frac{\frac{B}{R}-u}{1-u} = \frac{1}{q(u)}
\end{align*}
Since $u\leq \frac{B}{R} \leq 1$, it follows that $0 \leq p(u) \leq 1$.\\
\\
We show that $\sigma_R^* = (X^*)$ and $\sigma_R^* = (t^*,Z^*)$ are optimal strategies, where:
\begin{align*}
    X^* &\sim \mathbb{U}[0,2R]\\
    t^* &= 1 \quad \forall x \\
    Z^* &\sim \mathbb{U}[0,2R] \text{ with prob. } p(u) \text{, else } 0
\end{align*}
This strategy consists of Blue always replicating Reds allocation if the field is revealed, or Blue plays a uniform distribution with probability $p(u)$ if the field is not revealed, and 0 with probability $1-p(u)$. Red always plays the same uniform distribution, which is equivalent to her strategy in General Lotto games without scouts.\\

\begin{lemma} \label{lem:case2}
    These strategies are optimal in the case $u\leq \frac{B}{R} \leq 1$, and the value of the game equals $V=\frac{1}{2}\left(u+\frac{B}{R}\right)$.
\end{lemma}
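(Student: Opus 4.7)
The plan is to follow the template of Lemma~\ref{lem:case1}: verify the two resource budgets, then establish the one-sided bounds $P(\sigma_B,\sigma_R^*) \leq V$ for every $\sigma_B \in \Sigma_B$ and $P(\sigma_B^*,\sigma_R) \geq V$ for every $\sigma_R \in \Sigma_R$. The budget checks are short: $\E(X^*) = R$ since $X^* \sim \mathbb{U}[0,2R]$, and for any admissible $X$ with $\E(X) \leq R$,
\[
u\,\E(t^*(X)X) + (1-u)\,\E(Z^*) = uR + (1-u)p(u)R = uR + (B - uR) = B.
\]
Pairing the two proposed strategies against each other and using independence of the two uniforms on $[0,2R]$ gives $\p(Z^* \geq X^*) = p(u)/2$, so $P(\sigma_B^*,\sigma_R^*) = u + (1-u)p(u)/2 = (u + B/R)/2$, which is the value I need to hit.

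For Claim~1 I fix arbitrary $\sigma_B = (t,Z)$ and start from
\[
P(\sigma_B,\sigma_R^*) = u\,\E(t(X^*)) + (1-u)\,\E(\1[Z \geq X^*]).
\]
Conditioning on $Z$ and using $X^* \sim \mathbb{U}[0,2R]$ yields the clean bound $\E(\1[Z \geq X^*]) = \E[\min(Z/(2R),1)] \leq \E(Z)/(2R)$. The key algebraic step is the split
\[
u\,\E(t(X^*)) = u\,\E\!\left[t(X^*)\!\left(1 - \frac{X^*}{2R}\right)\right] + \frac{u}{2R}\,\E(t(X^*)X^*),
\]
which routes every unit Blue spends through the common conversion factor $1/(2R)$. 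Adding the three resulting summands and invoking Blue's budget constraint $u\E(t(X^*)X^*) + (1-u)\E(Z) \leq B$ collapses the last two into $B/(2R)$. The first term is at most $u\E[1 - X^*/(2R)] = u/2$ because $t(X^*) \leq 1$ and $1 - X^*/(2R) \geq 0$ on the support of $X^*$, producing the desired bound $V = u/2 + B/(2R)$.

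For Claim~2 I fix an arbitrary $\sigma_R = (X)$ with $\E(X) \leq R$. Since $t^* \equiv 1$ the call term contributes exactly $u$, and dropping Blue's nonnegative ``fold'' mass at $0$ gives
\[
P(\sigma_B^*,\sigma_R) \geq u + (1-u)\,p(u)\,\p(U \geq X),
\]
where $U \sim \mathbb{U}[0,2R]$ is independent of $X$. Conditioning on $X$ yields $\p(U \geq X \mid X) = \max(1 - X/(2R),0) \geq 1 - X/(2R)$, hence $\p(U \geq X) \geq 1 - \E(X)/(2R) \geq 1/2$. Using $(1-u)p(u) = B/R - u$ produces $P \geq u + (B/R - u)/2 = V$.

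The one spot I expect to wrestle with is Claim~1. In Case~1 the uniform's support $[0,2C(u)]$ is long enough that the ``overspill'' error $\E(\1[2C(u)<Z](Z/(2C(u))-1))$ can be dropped directly. Here, however, the support of Red's distribution is only $[0,2R]$ and Blue could in principle place $Z$-mass above $2R$ or concentrate calls on large values of $X^*$; neither inefficiency is captured by a single Markov-style bound. The decomposition above is what absorbs both leaks simultaneously, because every resource unit enters the payoff at rate $1/(2R)$, after which the budget constraint closes the bound tightly. Claim~2 is essentially immediate once one notes that $t^* \equiv 1$ removes all nontrivial structure on the call side.
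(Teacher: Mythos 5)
Your proof is correct and follows essentially the same route as the paper's: the same budget checks, the same conditioning bound $\E(\1[Z \geq X^*]) \leq \E(Z)/(2R)$ combined with the budget constraint and $t \equiv 1$ for Claim~1, and the same drop-the-fold-mass plus $\p(U \geq X) \geq 1/2$ argument for Claim~2. The only difference is presentational --- you split $u\,\E(t(X^*))$ before invoking the budget rather than substituting the budget identity into the $\E(Z)$ term first --- and you are slightly more careful with the inequality directions ($\E(X) \leq R$, budget $\leq B$) than the paper, which writes them as equalities.
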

\begin{proof}
First we show that the resource constraints are satisfied:
\begin{eqnarray*}
    \E T_R(X^*) &=& \E (X^*) = R \\
    \E T_B(t^*,Z^*,X) &=& u\E \left(t^*(X) X \right) +  (1-u)\E (Z^*) = u R + (1-u) p(u) R = B,
\end{eqnarray*}
where $X$ is an arbitrary non-negative random variable such that $\E X = R$.\\
Now it suffices to prove the following two claims.\\
\\
\textbf{Claim 1: given that Red plays $\sigma_R^*$, the payoff is at most $V$.}\\
If Red plays $\sigma_R^*$, the payoff for any $\sigma_B$ is:
\begin{align*}
    P(\sigma_B,\sigma_R^*) 
    &= u \E(t(X^*)) + (1-u)\E \left( \1[Z \geq X^*] \right)
\end{align*}
 We can rewrite $\E \left( \1[Z \geq X^*]\right)$ by conditioning on $Z$:
\begin{align*}
    \E \left( \1[Z \geq X^*]\right)
    &= \E \left( \E \left( \1[Z \geq X^*]|Z \right) \right)
    \end{align*}
    Where $X^* \sim \mathbb{U}[0,2R]$. We can use this to rewrite as follows:
    \begin{align*}
    \E \left( \E \left( \1[Z \geq X^*]|Z \right) \right)
    &= \E \left(\left( \frac{Z}{2R}\right)\1[2R \geq Z] + 1 \cdot \1[2R<Z]  \right)\\
    &= \E \left(\left( \frac{Z}{2R}\right)\left(1-\1[2R < Z]\right) + 1 \cdot \1[2R<Z]  \right)\\
    &= \E \left( \frac{Z}{2R}\right) -   \E \left( \left( \frac{Z}{2R}-1\right) \cdot \1[2R<Z] \right)\\
    &= \frac{\E(Z)}{2R} - \E \left( \left( \frac{Z}{2R}-1\right) \cdot \1[2R<Z] \right)\\
    &\leq \frac{\E(Z)}{2R}
\end{align*}
Using this equation, we can rewrite the payoff as:
\begin{align*}
    P(\sigma_B,\sigma_R^*) 
    &\leq u\E(t(X^*)) + (1-u)\left(\frac{\E(Z)}{2R}\right)
\end{align*}    
If we rewrite the expectation requirement, we get:
\begin{align*}
    (1-u)\E(Z) = B - u\E \left(t(X) X \right)
\end{align*}
Plugging this in gives:
\begin{align*}
     P(\sigma_B,\sigma_R^*) 
    &\leq u\E(t(X^*)) + \frac{1}{2R} \left( B - u\E \left(t(X^*) X^* \right) \right)\\
    &= \frac{B}{2R} + u\E\left(t(X^*)\left(1-\frac{X^*}{2R}\right)\right)
\end{align*}
Since $0\leq \left(1-\frac{X^*}{2R}\right)$, the payoff is optimised by taking $t(\cdot)$ as large as possible. It it therefore maximised by taking $t(x)=1 \forall x$, which gives:
\begin{align*}
    P(\sigma_B,\sigma_R^*) 
    &\leq \frac{B}{2R} + u\E \left(1\left(1-\frac{X^*}{2R}\right)\right) = \frac{B}{2R} + u\left(1-\frac{R}{2R} \right)\\
    &= \frac{1}{2}\left( \frac{B}{R} +u \right) = V
\end{align*}
This proves Claim 1.\\
\\
\textbf{Claim 2: given that Blue plays $\sigma_B^*$, the payoff is at least $V$.}\\
If Blue uses the strategy $\sigma_B^*$, the payoff for any $\sigma_R$ is:
\begin{align*}
    P(\sigma_B^*,\sigma_R) 
    &= u \E(t^*(X)) + (1-u)\E \left( \1[Z^* \geq X] \right) \\
    &= u \E(1) + (1-u)\E \left( \1[Z^* \geq X] \right) \\
    &= u + (1-u) \E \left( \1[Z^* \geq X] \right)
    \end{align*}
    We can rewrite $\E \left( \1[Z^* \geq X] \right)$ by conditioning on $X$:
    \begin{align*}
        \E \left( \1[Z^* \geq X] \right)
        &= \E \left( \E \left( \1[Z^* \geq X] | X\right) \right)\\
        &= \E \left( \E \left( (1-p(u)) \cdot 0 + p(u)\1[U \geq X] | X\right) \right)\\
        &= p(u) \E \left(\E \left(\1[U \geq X] | X\right)\right)
    \end{align*}
    Where $U \sim \mathbb{U}[0,2R]$. We can use this to rewrite as follows:
    \begin{align*}
    \E \left(\E \left(\1[U \geq X] | X\right)\right)
        &= \E \left(\left( 1-\frac{X}{2R}\right)\1[2R \geq X] + 0 \cdot \1[2R<X]  \right)\\
        &= \E \left(\left( 1-\frac{X}{2R}\right)\left( 1- \1[2R<X] \right) \right)\\
        &= \E \left( 1-\frac{X}{2R}\right) +  \E \left( \1[2R<X]\left( \frac{X}{2R}-1\right) \right)\\
        &= \left(1 - \frac{R}{2R}\right) + \E \left( \1[2R<X]\left( \frac{X}{2R}-1\right) \right)\\
        &= \frac{1}{2} + \E \left( \1[2R<X]\left( \frac{X}{2R}-1\right) \right)\\
        &\geq \frac{1}{2}
    \end{align*}
Here the last step follows from the fact that the expectation on the RHS is non-negative.
Plugging this into the original equation for the game payoff yields
\begin{align*}
     P(\sigma_B^*,\sigma_R) 
     &\geq  u + (1-u) \left(\frac{p(u)}{2}\right) = u +\frac{\frac{B}{R}-u}{2}\\
     &= \frac{1}{2}\left(u+\frac{B}{R}\right) = V
\end{align*}
This proves Claim 2 and therefore completes the proof.
\end{proof}


\subsection{Case 3: \texorpdfstring{$GL\text{-}S$}{GL-S} with \texorpdfstring{$B/R \leq u$}{B/R < u}}
In this case Blue is outnumbered, but has a significant amount of information. The optimal strategies of this case are surprising, as Red will do away with any randomness in her strategy.\\
\\
We show that $\sigma_R^* = (X^*)$ and $\sigma_R^* = (t^*,Z^*)$ are optimal strategies, where:
\begin{align*}
    X^* &= R\\
    t^* &= \frac{B}{uR} \quad \forall x \\
    Z^* &= 0
\end{align*}
This strategy consists of Blue replicating Reds allocation with probability $\frac{B}{uR}$ whenever the field is revealed, independently of the number of resources spotted. If the field is not revealed, Blue allocates no resources. Red always allocates $R$ resources, effectively removing any random element from her strategy, and thus removing the advantage Blue has with his scouts.\\

\begin{lemma} \label{lem:case3}
    These strategies are optimal strategies in the case $u\geq \frac{B}{R}$, and the value of the game is $V=\frac{B}{R}$: 
\end{lemma}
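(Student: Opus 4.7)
The plan is to follow the two-claim template already used in Lemmas~\ref{lem:case1} and~\ref{lem:case2}: verify feasibility of the proposed strategies, then show that $\sigma_R^*$ caps the payoff at $V = B/R$ and that $\sigma_B^*$ guarantees at least $V$. Feasibility is quick: $\E(X^*) = R$ is immediate, and for any feasible Red strategy one has $\E T_B(t^*, Z^*, X) = u \cdot (B/(uR)) \cdot \E(X) + 0 \leq (B/R)\, R = B$. Note that $t^* = B/(uR) \in [0,1]$ is admissible precisely under the Case~3 hypothesis $B/R \leq u$, so this strategy would not even be well-defined outside this regime.

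For Claim~1 I would argue as follows. Since $X^* \equiv R$ is deterministic, the payoff against any $\sigma_B = (t,Z)$ collapses to $u\, t(R) + (1-u)\,\p(Z \geq R)$, because $t(X^*)$ is just the number $t(R)$. The key trick is to evaluate Blue's budget constraint at the particular (feasible) Red strategy $X = R$, which yields $u\, t(R)\, R + (1-u)\,\E(Z) \leq B$. Combining this with Markov's inequality $\p(Z \geq R) \leq \E(Z)/R$ on the nonnegative variable $Z$ then gives the desired bound $u\, t(R) + (1-u)\,\p(Z \geq R) \leq B/R = V$ after dividing through by $R$.

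For Claim~2 the computation is even shorter. Blue's strategy is structurally rigid: $t^*(\cdot) \equiv B/(uR)$ and $Z^* = 0$. Against any Red strategy $X$, the payoff becomes $u \cdot (B/(uR)) + (1-u)\,\p(0 \geq X) = B/R + (1-u)\,\p(X = 0)$, which is at least $B/R$ because $X \geq 0$ forces $\p(X = 0) \geq 0$. No optimisation over Red is required; the bound is uniform in $\sigma_R$.

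The main conceptual obstacle is recognising the structural insight that drives this case: once Red plays deterministically, scouting reveals nothing Blue did not already know, so Blue's leverage comes from randomising the call probability under budget rather than from reacting to information. Mechanically, the only non-obvious move is to apply Blue's budget inequality at the specific Red strategy $X = R$ (rather than attempting to handle all feasible $X$ at once) and then to convert the tail probability $\p(Z \geq R)$ into a first-moment bound via Markov. Once this step is identified, both claims reduce to one-line verifications, and the striking qualitative conclusion---Red's optimum contains no randomness when scouting is sufficiently likely---emerges from the proof itself.
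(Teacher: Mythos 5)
Your proposal is correct and follows essentially the same route as the paper: verify feasibility, bound Claim~1 by applying Blue's budget constraint at the deterministic Red strategy $X=R$ together with the bound $R\,\p(Z\geq R)\leq \E(Z)$ (which the paper writes out as $\E(R\cdot\1[Z\geq R])\leq \E(Z\cdot\1[Z\geq R])\leq\E(Z)$ rather than naming it Markov's inequality), and verify Claim~2 by direct computation. The observation that $t^*=B/(uR)\in[0,1]$ exactly under the Case~3 hypothesis is a nice explicit addition that the paper leaves implicit.
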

\begin{proof}
First we show that the resource constraints are satisfied:
\begin{eqnarray*}
    \E T_R(X^*) &=& \E (X^*) = R \\
    \E T_B(t^*,Z^*,X) &=& u\E \left(t^*(X) X \right) +  (1-u)\E (Z^*) u \left(\frac{B}{uR} \right) \E \left( X\right) + (1-u) \cdot 0 = B,
\end{eqnarray*}
where $X$ is an arbitrary non-negative random variable such that $\E X = R$.\\
Now it suffices to prove the following two claims.\\
\\
\textbf{Claim 1: given that Red plays $\sigma_R^*$, the payoff is at most $V$.}\\
If Red uses the strategy $\sigma_R^*$, the payoff for any $\sigma_B$ is:
\begin{align*}
    P(\sigma_B,\sigma_R^*) 
    &= u \E(t(X^*)) + (1-u)\E \left( \1[Z \geq X^*] \right)\\
    &= u \E(t(X^*)) + (1-u)\E \left( \1[Z \geq R] \right)\\
    &= \frac{1}{R} \left( u \E(R \cdot t(X^*)) + (1-u)\E \left( R \cdot \1[Z \geq R] \right)\right)
\end{align*}
Note that: 
\begin{align*}
    \E(R \cdot t(X^*)) = \E(t(X^*)X^*)
\end{align*}
and
\begin{align*}
    \E \left( R \cdot \1[Z \geq R] \right) \leq \E \left( Z \cdot \1[Z \geq R] \right) \leq \E(Z)  
\end{align*}
Plugging these into the equation for the payoff gives:
\begin{align*}
    P(\sigma_B,\sigma_R^*) 
    &\leq \frac{1}{R} \left( u \E \left( t(X^*)X^* \right) + (1-u)\E \left( Z \right)\right)\\
    &= \frac{1}{R} \left(B\right) = \frac{B}{R} = V,
\end{align*}
where the last step follows from the expectation requirement for Blue.\\
This proves Claim 1.\\
\\
\textbf{Claim 2: given that Blue plays $\sigma_B^*$, the payoff is at least $V$.}\\
If Blue uses the strategy $\sigma_B^*$, the payoff for any $\sigma_R$ is:
\begin{align*}
    P(\sigma_B^*,\sigma_R) 
    &= u \E(t^*(X)) + (1-u)\E \left( \1[Z^* \geq X] \right) \\
    &= u \left(\frac{B}{uR}\right) + (1-u) \cdot P[X=0] \\
    &\geq \frac{B}{R} = V.
    \end{align*}
This proves Claim 2 and therefore completes the proof.
\end{proof}
\noindent
As the payoff in this case no longer depends on the $u$, it follows that increasing the $u$ past $\frac{B}{R}$ no longer increases payoff.

\subsection{Solution of \texorpdfstring{$GL\text{-}S$}{GL-S}}
We combine the results of Lemma~\ref{lem:case1}, \ref{lem:case2} and \ref{lem:case3} to form our main result:\\

\begin{theorem}  \label{thm:GL-S}
    In $GL\text{-}S$, it is optimal for Blue to play $\sigma_B^*=(t^*,Z^*)$ and optimal for Red to play $\sigma_R^*=(X^*)$, where $X^*,t^*$ and $Z^*$ are chosen according to Table~\ref{tab:Optimal_Strategies}.\\ 
\bgroup
\def\arraystretch{1.5}
\begin{table}[!ht]
    \centering
    \begin{tabular}{c ||c |c |c}
         &  $X^*$ & $t^*(\cdot)$ & $Z^*$\\ \hhline{=||=|=|=}
         $\frac{B}{R} \leq u$        & R & $\frac{B}{uR}$ & 0\\[1ex] \hline
         $u \leq \frac{B}{R} \leq 1$ & $\mathbb{U}[0,2R]$ &1 & $
             \mathbb{U}[0,2R] \text{ w.p. } p(u), \text{ else } 0 $ \\[1ex] \hline
         $1 \leq \frac{B}{R}$        & $\mathbb{U}[0,2C(u)] \text{ w.p. } q(u), \text{ else } 0$ &1 & $\mathbb{U}[0,2C(u)]$
    \end{tabular}
    \caption{Optimal strategy of $GL\text{-}S$.}
    \label{tab:Optimal_Strategies}
\end{table}\\
\egroup
The value $V$ of $GL\text{-}S$ is therefore:
\begin{align*}
    V = P(\sigma_B^*,\sigma_R^*) = 
    \begin{cases}
   \frac{B}{R},                             & \text{if } \frac{B}{R} \leq u,\\
   \frac{u+\frac{B}{R}}{2},                 & \text{if } u \leq \frac{B}{R} \leq 1\\
    1-\frac{(1-u)^2}{2(\frac{B}{R}-u)},     & \text{if } 1 \leq \frac{B}{R},
\end{cases}
\end{align*}
\end{theorem}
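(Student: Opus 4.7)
The theorem is essentially a packaging of Lemmas~\ref{lem:case1}, \ref{lem:case2} and \ref{lem:case3}, so my plan is to assemble these three building blocks and verify that they fit together coherently. First I would observe that the three regions $\frac{B}{R}\leq u$, $u\leq \frac{B}{R}\leq 1$ and $1\leq\frac{B}{R}$ cover every admissible parameter configuration with $B,R>0$ and $u\in[0,1]$, sharing only their common boundaries. On each region the corresponding lemma produces a pair $(\sigma_R^*,\sigma_B^*)$ together with a value $V$, and reading these off directly populates Table~\ref{tab:Optimal_Strategies} and the piecewise formula for $V$.

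The only substantive step beyond invoking the lemmas is to check that the value function is well defined at the boundaries, i.e.\ that the two formulas valid on either side of a boundary agree. At $\frac{B}{R}=u$, Case~2 gives $\tfrac{1}{2}(u+u)=u$, matching the $V=\frac{B}{R}=u$ from Case~3. At $\frac{B}{R}=1$, Case~2 gives $\tfrac{1}{2}(1+u)$, while the Case~1 expression $1-\frac{(1-u)^2}{2(\frac{B}{R}-u)}$ reduces to $1-\tfrac{1-u}{2}=\tfrac{1+u}{2}$, which agrees. This continuity is not logically required (either lemma applies on a shared boundary and gives the same answer), but it confirms the global consistency of the solution.

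Finally, I would invoke the standard saddle point argument uniformly across the three regimes: Claim~1 in each lemma shows that Red's proposed $\sigma_R^*$ caps Blue's payoff at $V$ against any Blue strategy, and Claim~2 shows that Blue's $\sigma_B^*$ guarantees payoff at least $V$ against any Red strategy. Together these bracket the value of the game at exactly $V$ and certify that $\sigma_R^*,\sigma_B^*$ are optimal. Since all the genuine analytic work has been carried out in the three lemmas, there is no real obstacle remaining; the hardest part of writing the proof of the theorem is simply laying out the case distinction cleanly and noting that the three regimes tile the parameter space.
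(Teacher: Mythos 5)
Your proposal matches the paper exactly: the paper gives no separate proof of Theorem~\ref{thm:GL-S} beyond the remark that it ``combines the results of Lemma~\ref{lem:case1}, \ref{lem:case2} and \ref{lem:case3}'', which is precisely your assembly of the three regimes via their respective Claim~1/Claim~2 saddle-point arguments. Your additional check that the value formulas agree on the boundaries $\frac{B}{R}=u$ and $\frac{B}{R}=1$ is a correct and welcome consistency verification, but it does not change the route.
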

\noindent
The value of $GL\text{-}S$ depends both $u$ and the ratio $B/R$. We illustrate this function in three different figures.
The detection chance $u$ is fixed in Figure~\ref{fig:single_field_value}, the ratio $B/R$ is fixed in Figure~\ref{fig:Lotto} and the value of the game is portrayed as a heatmap in Figure~\ref{fig:LottoPayoffHeatmap}.\\

\begin{figure}[!ht]
    \centering
    \includegraphics[width=\textwidth]{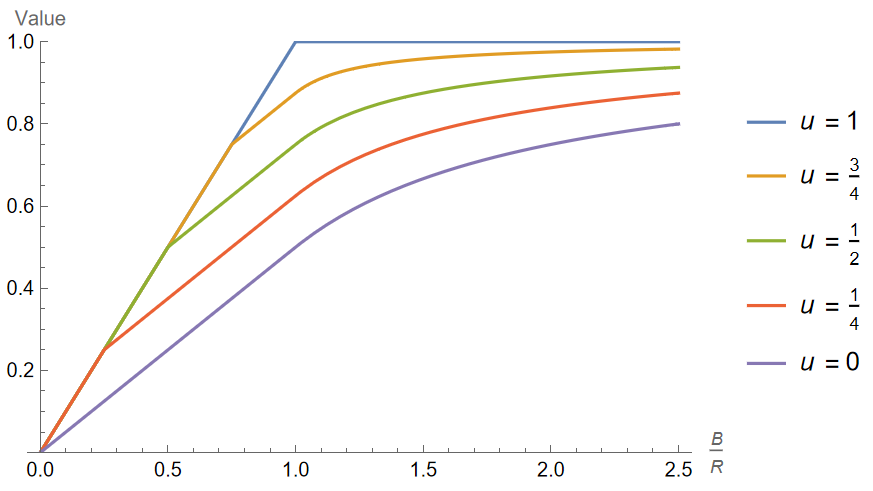}
    \caption{The value of $GL\text{-}S$ for fixed detection probability $u$ as a function of $B/R$.}
    \label{fig:single_field_value}
\end{figure}

\begin{figure}[!ht]
    \centering
    \includegraphics[width=\textwidth]{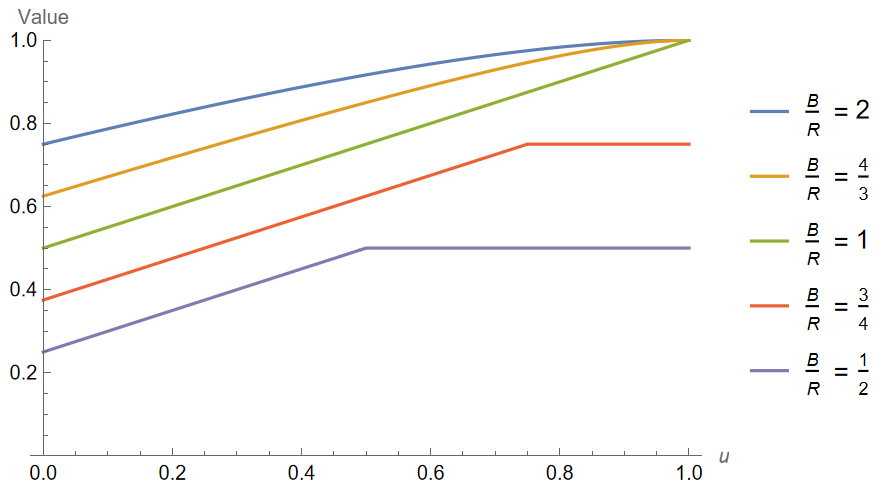}
    \caption{The value of $GL\text{-}S$ for fixed $B/R$ ratios as a function of the detection probability $u$.}
    \label{fig:Lotto}
\end{figure}

\begin{figure}[!ht]
    \centering
    \includegraphics[width=\textwidth]{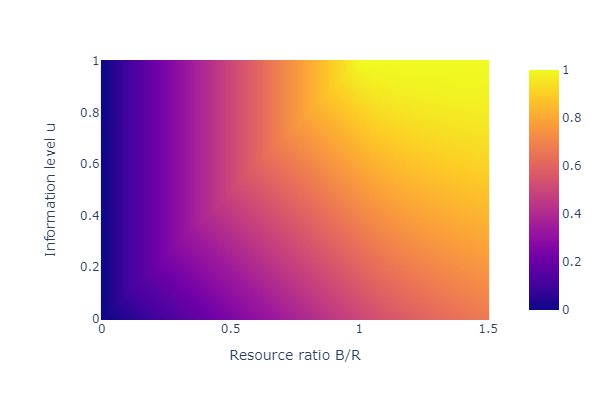}
    \caption{A heatmap of the value of $GL\text{-}S$, with the resource ratio $B/R$ on the x-axis and the detection probability $u$ on the y-axis.}
    \label{fig:LottoPayoffHeatmap}
\end{figure}

\noindent
Some interesting insights follow from this optimal solution to the $GL\text{-}S$ game. The first is that $t(\cdot)$ is always maximised, meaning that Blue calls as many fields as he can, given his budget. Calling a field is always a more efficient way of using resources, then using resources on a field that was not revealed.\\
\\
The second insight follows from the question whether Red should distribute her resources more or less evenly compared to a standard Lotto game.
%
The answer is both, but in different cases. If Red heavily outnumbers Blue, i.e. $\tfrac{B}{R}<u$, she plays as evenly as possible, sending exactly $R$ resources every time. However, if Blue outnumbers Red, Red plays riskier than in the standard General Lotto game, allowing higher number of resources per round, but also sending 0 resources more often. This also means that, as long as $u<1$, Red always has a chance to win, even if heavily outnumbered.\\
%
%
\\
\begin{remark}[Comparison with known results when $u=0$]\label{rmk:sf}
By setting the detection probability $u$ in $GL\text{-}S$ to $0$, we retrieve a General Lotto game of which the value should be consistent with earlier work. The value of this game is $\tfrac{B}{2R}$ whenever $B\leq R$ and $1 - \frac{R}{2B}$ otherwise. This is consistent with Corollary 1 of \citet{kovenock2021generalizations}. Also the optimal strategies found there coincide with the optimal strategies that we use. It is also consistent with \cite{hart2008discrete} and \cite{sahuguet2006campaign} who showed this result earlier.\\
\end{remark}

\begin{remark}[Comparison with known results when $u=1$]
By setting $u=1$ we obtain a game that is in some respects similar to the pre-allocation game in \citet{chandan2022strategic}. Namely, when $u=1$ Blue always sees what Red does and in that sense Red `pre-allocates'. There are two major differences. The first is that in our case Red does not get the opportunity to allocate any additional resources and in that sense forcibly pre-allocates all her resources. The second difference is that the pre-allocation in \citet{chandan2022strategic} is performed `Blotto-style', i.e. such that the sum of the pre-allocated resources equals a fixed constant. In our case it should only in expectation equal the total resources.
\end{remark}

\section{General Lotto with Scouts: Multistage} \label{sec:multifield}
In this section we consider a multistage version $GL\text{-}MS$ of $GL\text{-}S$ where Blue and Red first divide their total resources over the battlefield and then play $GL\text{-}S$ on each battlefield. The latter step was solved in Section~\ref{sec:singlefield}, so we know the value of the games on the individual battle fields. Therefore, this multistage version of the game can be considered a Colonel Blotto game with a complicated payoff function given by those game values. First we introduce the game and discuss under which conditions the game value exists. Subsequently, we provide strategies for Blue and Red, which provide upper and lower bounds on the value of the multistage game. Lastly, we specify under which conditions these bounds coincide, in which case these strategies must be optimal.

\subsection{The multistage game}
We start out with the same two players with total resources $B,R>0$, respectively. 
Next, let there be $n$ fields, where $n$ is an integer larger than one.
Each field is endowed with a worth $w_i>0$ and a detection probability $u_i \in [0,1]$. We assume that $\sum_i w_i = 1$. We denote the Multistage General Lotto game with Scouts with these parameters by $GL\text{-}MS(B,R,\{u_i\},\{w_i\})$.\\
\\
The game is played in two stages.
\begin{itemize}
    \item[i)] Red chooses a distribution $\underline R=(R_1,..,R_n)$ of her total resources over the fields such that $\sum_i R_i = R$. Simultaneously Blue chooses a distribution $\underline B=(B_1,..,B_n)$ of his total resources over the fields such that $\sum_i B_i = B$.
    \item[ii)] On each field an $GL\text{-}S(B_i,R_i,u_i)$ game is played with worth $w_i$.
\end{itemize}
The total payoff for Blue is the sum of the individual payoffs.\\
\\
Recall from Section~\ref{sec:singlefield} that given the resources $B_i, R_i$ at field $i$ and given its worth $w_i$ and the detection probability $u_i$, the payoff at field $i$ equals $\phi_i(B_i/R_i)$ where
\begin{equation*}
    \phi_i(x) = w_i
    \begin{cases}
        x & x \leq u_i \\
        \frac{1}{2}(x+u_i) & u_i \leq x \leq 1 \\
        1 - \frac{(1-u)^2}{2(x-u)} & x \geq 1
    \end{cases}
\end{equation*}
Note that the field worth $w_i$ was not present before, but is included in $\phi_i$ here. Since for $n=1$ there is just one field with worth $1$, this is consistent with before.\\
\\
Therefore the first stage of the game has payoff function 
\begin{equation*}
    H(\underline B, \underline R) = \sum_{i=1}^n \phi_i\left(\frac{B_i}{R_i}\right).
\end{equation*}
Our goal is to solve this game with the constraints
\begin{equation*}
    \sum_{i=1}^n B_i = B, \hspace{1cm} \sum_{i=1}^n R_i = R.
\end{equation*}

\subsection{Existence of Game Value}
There are three things that make it difficult to derive the game value. First, even though  $H$ is concave in $B_1,..,B_n$, it is not convex in $R_1,..,R_n$. Second, $H$ is not everywhere differentiable, because of the sharp turn in the $\phi_i$'s at $x=u_i$. Third, $H$ ill defined at $B_i=R_i = 0$ for any $i$ and cannot be continuously extended there.\\
\\
Without the singularity at (0,0), the payoff function is continuous and we can guarantee the existence of the game value. Allocating zero resources to a field is both theoretically and in practice unlikely to be an optimal strategy. It is possible to remove this singularity from the game in the following way: we fix $\epsilon >0$ and require that $B_i,R_i \geq \epsilon$ for all $i$. This creates a new, slightly more restricted game. As a result of this restriction $H$ is continuous on the domain of strategies while the domain is still compact. By a classical result of \cite{glicksberg1952further}, there must exists a Nash equilibrium for the corresponding game.\\ 
\\
Now that we have discussed existence, we want to study the exact value of the game.
To bound the value of the game, we provide strategies for Red And Blue and prove that these strategies guarantee certain upper and lower bounds. Whenever those bounds coincide, we know the exact value of the game.\\
\\
To find a good guess for optimal strategies for Red and blue, we explore the case where $H$ is continuously differentiable, even though it is not. If the $\phi$'s were $C^1$, then any Nash equilibrium would have to satisfy the equations in Lemma \ref{lemma:minimax}, with the provided optimal strategies. We use these strategies to provide bounds on the original game.\\

\begin{lemma}\label{lemma:minimax}
Let $\phi_i:[0,\infty)\rightarrow \R$, $i=1,..,n$, be $C^1$ and increasing and let $B,R>0$. Suppose that $B_1,..,B_n$, $R_1,..,R_n$ satisfy
\begin{equation}\label{eq:minimax}
    \sum_{i=1}^n \phi_i\left(\frac{B_i}{R_i}\right) = \inf_{R'_1,..,R'_n}\max_{B'_1,..,B'_n} \sum_{i=1}^n \phi_i\left(\frac{B'_i}{R'_i}\right)
     = \sup_{B'_1,..,B'_n} \min_{R'_1,..,R'_n} \sum_{i=1}^n \phi_i\left(\frac{B'_i}{R'_i}\right)
\end{equation}
given
\begin{equation*}
    \sum_{i=1}^n B_i = B, \hspace{1cm} \sum_{i=1}^n R_i = R, \hspace{1cm} B_i, R_i \geq 0.
\end{equation*}
Then $B_i = c_i B$ and $R_i = c_i R$,
where 
\begin{equation*}
    c_i = \frac{\phi_i'(B/R)}{\sum_{j=1}^n \phi_j'(B/R)}.
\end{equation*}
\end{lemma}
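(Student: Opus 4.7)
The plan is to exploit the two-sided optimality at the saddle point via Lagrange multipliers. The equalities in \eqref{eq:minimax} say that $(B_i,R_i)$ simultaneously (i) maximises $\sum_i \phi_i(B_i'/R_i)$ in $B_1',\dots,B_n'$ under $\sum_i B_i' = B$, and (ii) minimises $\sum_i \phi_i(B_i/R_i')$ in $R_1',\dots,R_n'$ under $\sum_i R_i' = R$. Since each $\phi_i$ is $C^1$, both optimisations admit classical first-order conditions. Before writing them down I would argue that the saddle lies in the interior (all $B_i,R_i>0$): the ratio $B_i/R_i$ is ill-defined at $R_i=0$, and letting $R_i \downarrow 0$ with $B_i>0$ sends $\phi_i(B_i/R_i)$ toward its supremum, which is the wrong direction for the minimising player; likewise, $\phi_i$ increasing makes $B_i = 0$ suboptimal for Blue whenever $\phi_i'(B/R)>0$.

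With an interior saddle, I would introduce Lagrange multipliers $\lambda$ and $\mu$ for the two scalar budget constraints. The stationarity conditions then read
\begin{equation*}
\frac{\phi_i'(B_i/R_i)}{R_i} = \lambda
\qquad\text{and}\qquad
\frac{\phi_i'(B_i/R_i)\, B_i}{R_i^{2}} = \mu
\qquad (i=1,\dots,n),
\end{equation*}
both sides positive because $\phi_i$ is increasing. Dividing the second equation by the first yields $B_i/R_i = \mu/\lambda$, independent of $i$. Summing $B_i = (\mu/\lambda) R_i$ and invoking $\sum_i B_i = B$, $\sum_i R_i = R$ forces $\mu/\lambda = B/R$, hence $B_i/R_i = B/R$ for every $i$. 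Substituting this back into the first stationarity condition gives $R_i = \phi_i'(B/R)/\lambda$; normalising with $\sum_i R_i = R$ then produces $\lambda = R^{-1}\sum_j \phi_j'(B/R)$, whence $R_i = c_i R$ with $c_i$ as in the statement, and $B_i = (B/R) R_i = c_i B$ follows at once.

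The principal obstacle is not the algebra but the justification of the interior assumption. The lemma permits $B_i,R_i \ge 0$, whereas the Lagrange computation needs $R_i>0$ (to make ratios and derivatives well-defined) and $B_i>0$ (so no nonnegativity multiplier enters the first-order system). A degenerate case arises if $\phi_i'(B/R)=0$ for some $i$, where the formula returns $c_i=0$; this is harmless in the application of Section~\ref{sec:multifield} because the $\phi_i$ there have strictly positive derivative on the relevant range. Given those caveats, strict monotonicity of each $\phi_i$ renders every Lagrange multiplier above strictly positive, rules out boundary optima, and the calculation produces the claimed common-ratio form $B_i=c_i B$, $R_i=c_i R$.
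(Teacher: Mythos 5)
Your proposal is correct and follows essentially the same route as the paper: extract the two first-order (Lagrange) conditions from the max over the $B_i$'s and the min over the $R_i$'s, divide them to conclude $B_i/R_i$ is constant in $i$, identify that constant as $B/R$ via the budget constraints, and substitute back to solve for $R_i$ and hence $B_i$. Your added remarks on interiority and the degenerate case $\phi_i'(B/R)=0$ are more careful than the paper's proof, which applies the multiplier conditions without comment, but they do not change the argument.
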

\begin{proof}
Suppose that $B_1,..,B_n$, $R_1,..,R_n$, form a solution. Then in particular $B_1,..,B_n$ maximise
\begin{equation*}
    f: (x_1,..,x_n) \longmapsto \sum_{i=1}^n \phi_i\left(\frac{x_i}{R_i}\right)
\end{equation*}
given $0 = g(x_1,..,x_n) := x_1+..+x_n - B$. Using the theory of Lagrange multipliers, we see that then for all $i$:
\begin{equation}\label{eq:minimax1}
    \lambda = \left.\frac{\partial}{\partial x_i} f(x_1,..,x_n)\right|_{B_1,..,B_n} = \frac{1}{R_i} \phi_i' \left(\frac{B_i}{R_i}\right),
\end{equation}
where $\lambda\in \R$ is a fixed constant. Reasoning similarly starting from the RHS of~\eqref{eq:minimax}, we see that there is a Lagrange multiplier $\mu$ such that
\begin{equation}\label{eq:minimax2}
    \mu = \frac{-B_i}{R_i^2} \phi_i' \left(\frac{B_i}{R_i}\right).
\end{equation}
Plugging~\eqref{eq:minimax1} into \eqref{eq:minimax2}, we see that
\begin{equation*}
    \mu = \frac{-B_i}{R_i^2} \phi_i' \left(\frac{B_i}{R_i}\right) = \frac{-B_i}{R_i} \frac{1}{R_i} \phi_i' \left(\frac{B_i}{R_i}\right) = \frac{-B_i}{R_i} \lambda,
\end{equation*}
so
\begin{equation*}
    \frac{B_i}{R_i} = \frac{-\mu}{\lambda} := c
\end{equation*}
is constant. Using the boundary conditions, we obtain
\begin{equation*}
    B = \sum_{i=1}^n B_i = \sum_{i=1}^n c R_i = c \sum_{i=1}^n R_i = cR,
\end{equation*}
so $c = \tfrac{B}{R}$, so in particular $B_i = \tfrac{B}{R} R_i$. Plugging this back into~\eqref{eq:minimax1} and rearranging yields
\begin{equation*}
    R_i = \frac{1}{\lambda} \phi_i'\left(\frac{B}{R}\right).
\end{equation*}
Since $R_1,..,R_n$ must sum to $R$, this implies that 
\begin{equation*}
    \lambda = \frac{1}{R} \sum_{j=1}^n \phi_j'\left(\frac{B}{R}\right),
\end{equation*}
so $R_i = c_i R$ with $c_i$ defined as in the lemma statement. Using now that $B_i = \tfrac{B}{R} R_i$, this gives what we wanted.
\end{proof}

\subsection{Upper bound on the game value}\label{subsec:upper_bound}
To obtain an upper bound on the game value, we let Red use the strategy that is suggested by Lemma~\ref{lemma:minimax}.\\

\begin{lemma}[Upper bound]\label{lemma:upper}
    The game value is bounded from above by 
    \begin{equation*}
        \sum_{i=1}^n \phi_i\left(\frac{B}{R}\right).
    \end{equation*}
\end{lemma}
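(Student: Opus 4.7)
The plan is to let Red use the Lagrange-style strategy identified in Lemma~\ref{lemma:minimax}: for each $i$ choose a supergradient $g_i > 0$ of the concave function $\phi_i$ at the point $B/R$, set $c_i = g_i / \sum_j g_j$, and allocate $R_i = c_i R$. I will then show that against this fixed $\underline R$, every feasible Blue response yields a payoff bounded by $\sum_i \phi_i(B/R)$; taking the maximum over Blue gives the required upper bound on the game value.

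The key analytical input is concavity of each $\phi_i$ on $[0,\infty)$, which I would verify piecewise: linear with slope $w_i$ on $[0,u_i]$, linear with the smaller slope $w_i/2$ on $[u_i,1]$ (so a concave kink at $u_i$), and on $[1,\infty)$ smooth with $\phi_i''(x) = -w_i(1-u_i)^2/(x-u_i)^3 \leq 0$, matching in value and slope at $x=1$. Concavity delivers the supporting-hyperplane inequality $\phi_i(y) \leq \phi_i(B/R) + g_i (y - B/R)$ valid for every $y \geq 0$ and every supergradient $g_i$. Applying it with $y = B_i/R_i$ and summing produces $\sum_i \phi_i(B_i/R_i) \leq \sum_i \phi_i(B/R) + \sum_i g_i \bigl(B_i/R_i - B/R\bigr)$, so the task reduces to showing that this correction term vanishes.

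This cancellation is precisely what the choice of $c_i$ engineers: by construction $g_i / R_i = g_i/(c_i R) = (1/R) \sum_j g_j$ is independent of $i$, so $\sum_i g_i B_i / R_i = (B/R) \sum_j g_j$, which exactly matches $\sum_i g_i \cdot (B/R)$. Assembling the three steps gives the bound.

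The main subtlety I foresee is the non-differentiability of $\phi_i$ at $x = u_i$: when $B/R = u_i$ for some $i$ I must pick $g_i$ from the supergradient interval $[w_i/2,\, w_i]$, though any choice works equally well. A secondary point is feasibility of Red's strategy in the $\epsilon$-restricted game of Section~\ref{sec:multifield}: since each $\phi_i$ is strictly increasing every supergradient $g_i$ is strictly positive, so $c_i > 0$ and hence $R_i \geq \epsilon$ holds once $\epsilon$ is chosen small enough; this also ensures we never divide by zero in the argument.
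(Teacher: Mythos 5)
Your proposal is correct and follows essentially the same route as the paper: Red plays the allocation $R_i \propto \phi_i'(B/R)$ suggested by Lemma~\ref{lemma:minimax}, concavity of each $\phi_i$ yields the supporting-hyperplane inequality, and the normalisation $c_i = g_i/\sum_j g_j$ makes the linear correction term cancel against the constraint $\sum_i B_i = B$. The only difference is at the kink $B/R = u_i$: you select a supergradient from $[w_i/2, w_i]$ and proceed directly, whereas the paper perturbs such $u_i$ by $\epsilon$ and passes to the limit; your treatment of that point is slightly cleaner, but the argument is otherwise the same.
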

\begin{proof}
First we assume that $\tfrac{B}{R} \neq u_i$ for all $i$. This implies that each $\phi_i$ is differentiable at $B/R$, so we can set $R_i$ according to Lemma~\ref{lemma:minimax}:
\begin{equation*}
    R_i = R \frac{\phi_i'(B/R)}{\sum_{j=1}^n \phi_j'(B/R)}.
\end{equation*}
Now let $B_1,..,B_n$ be an arbitrary strategy of Lotto. Then, using a Taylor expansion for each $i$ of $\phi_i$ around $B/R$ and the concavity of the $\phi_i$'s, we obtain the following bound for the payoff:
\begin{eqnarray*}
    P(R_1,..,R_n,B_1,..,B_n) = \sum_{i=1}^n \phi_i\left(\frac{B_i}{R_i}\right) \leq \sum_{i=1}^n \left[\phi_i\left(\frac{B}{R}\right)
     + \phi_i'\left(\frac{B}{R}\right) \left(\frac{B_i}{R_i} - \frac{B}{R}\right) \right].
\end{eqnarray*}
Now substituting the expression for $R_i$ we see that
\begin{eqnarray*}
    \sum_{i=1}^n \phi_i'\left(\frac{B}{R}\right) \left(\frac{B_i}{R_i} - \frac{B}{R}\right) &=& 
    \sum_{i=1}^n \phi_i'\left(\frac{B}{R}\right) \frac{B_i \sum_j \phi_j'\left(\frac{B}{R}\right) }{R \phi_i'\left(\frac{B}{R}\right)} -
    \frac{B}{R} \sum_{i=1}^n \phi_i'\left(\frac{B}{R}\right) \\
    &=& \frac{1}{R} \sum_{i=1}^n B_i \sum_{j=1}^n \phi_j'\left(\frac{B}{R}\right)   - \frac{B}{R} \sum_{i=1}^n \phi_i'\left(\frac{B}{R}\right) = 0,
\end{eqnarray*}
where in the last equation we used that the $B_i$'s sum to $B$. This implies that 
\begin{equation*}
    H(B_1,..,B_n,R_1,..,R_n) \leq \sum_{i=1}^n \phi_i\left(\frac{B}{R}\right),
\end{equation*}
which is what we wanted.\\
\\
To deal with the case where $u_i = B/R$ for some (or possibly multiple $u_i$), change all such $u_i$ by at most $\epsilon$ to $u_i^*$. Now note that there is a constant $L$ such that each $\phi_i$ only changes by at most $L\epsilon$ (with respect to the uniform norm). This implies that the game value can change by at most $nL\epsilon$. Similarly the bound in this lemma only changes by at most a constant times $\epsilon$. Since this holds for all $\epsilon$, the bound of this lemma must also hold for $\epsilon = 0$.
\end{proof}

\noindent
Now we work out what this means in the case $\tfrac{B}{R} < 1$. We assume WLOG that $u_1\leq u_2 .. \leq u_n$. Suppose there is $0\leq k\leq n$ such that $u_k < \tfrac{B}{R} < u_{k+1}$ (where we set $u_0=0$ and $u_{n+1}=1$). In this case we obtain that
\begin{equation*}
\phi_i'\left(\frac{B}{R}\right) = 
    \begin{cases}
        \frac{w_i}{2} & i \leq k \\
        w_i & i > k
    \end{cases}
\end{equation*}
Denoting $w_{(k)} = \sum_{i\leq k} w_i$, this implies that
\begin{equation*}
    \sum_{i = 1}^n \phi_i'\left(\frac{B}{R}\right) = 1 - \frac{w_{(k)}}{2}.
\end{equation*}
The strategy for Red from Lemma~\ref{lemma:minimax} is therefore given by
\begin{equation*}
    R_i =
    \begin{cases}
        R \frac{w_i}{2 - w_{(k)}} & i \leq k \\
        R \frac{2w_i}{2 - w_{(k)}}& i > k 
    \end{cases}
\end{equation*}
The upper bound from Lemma~\ref{lemma:upper} now equals
\begin{eqnarray*}
    \sum_{i=1}^n \phi_i\left(\frac{B}{R}\right) &=& \sum_{i=1}^k \frac{w_i}{2} \left(u_i + \frac{B}{R}\right) + \sum_{i=k+1}^n w_i \frac{B}{R} \\
    &=& \frac{1}{2} \sum_{i=1}^k w_i u_i + \frac{B}{R} \left( 1 - \frac{w_{(k)}}{2}\right).
\end{eqnarray*}

\subsection{Lower bound on the game value}
As noted before, $GL\text{-}MS$ does not generally have a Nash equilibrium because of the non-convexity of $H$. This non-convexity is in turn caused by the non-convexity of $\phi_i(1/\cdot)$ for each $i$. To obtain a lower bound on the game value, we devise a different version of the game where this convexity does hold and find the optimal solution for that game.

\subsubsection{New payoff function}
To devise the new game, we take the following steps:
\begin{itemize}
    \item[i)] We take the functions $\phi_i$ and consider $\psi_i: x\mapsto \phi_i(1/x)$.
    \item[ii)] We then define $\psi_i^\dag$ as the lower convex envelope of $\psi_i$ (i.e. the largest convex function that is dominated by $\psi_i$). 
    \item[iii)] We transform $\psi_i^\dag$ back in the same way to define $\phi_i^\dag: x\mapsto \psi_i^\dag(1/x)$. This function is dominated by $\phi_i$ and is still concave. However, by construction $\phi_i^\dag(1/\cdot)$ is now convex.
\end{itemize}

\begin{figure}
    \centering
    \begin{subfigure}[b]{0.45\textwidth}
         \centering
         \includegraphics[width=\textwidth]{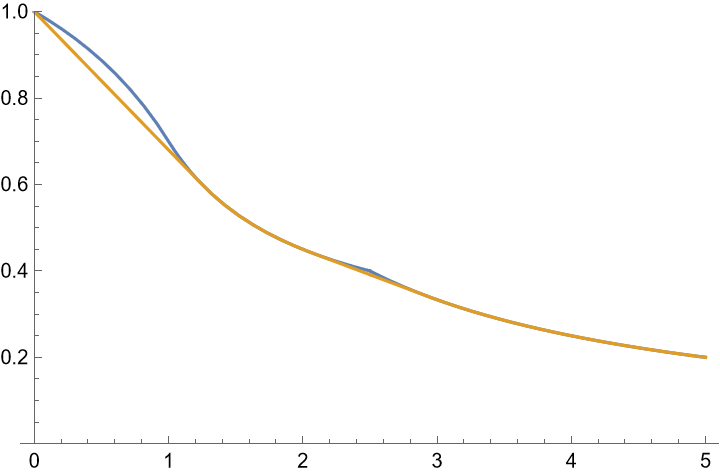}
         \caption{$\psi_i$ and $\psi_i^\dag$ for $u_i=\frac{2}{5}$}
     \end{subfigure}
     \hfill
     \begin{subfigure}[b]{0.45\textwidth}
         \centering
         \includegraphics[width=\textwidth]{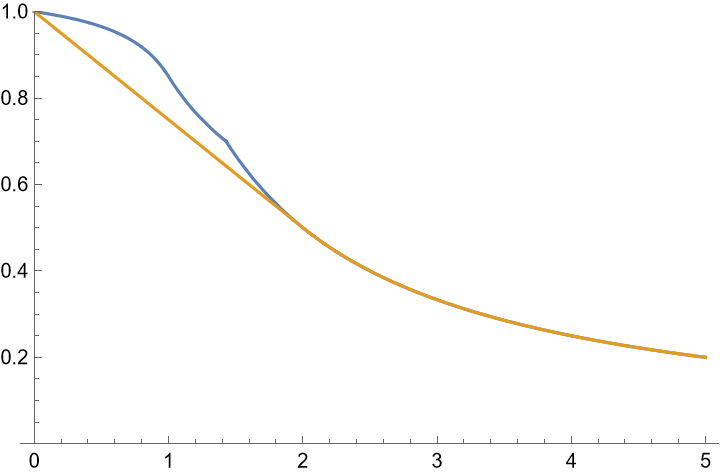}
         \caption{$\psi_i$ and $\psi_i^\dag$ for $u_i=\frac{7}{10}$}
     \end{subfigure}
    \caption{The function $\psi_i$ and its convex envelope $\psi_i^\dag$ for two different $u_i$, both with $w_i = 1$.}
    \label{fig:inv_func}
\end{figure}

We first follow this recipe for $u_i > 0$. In that case we get
\begin{equation*}
    \frac{1}{w_i} \psi_i(x) = 
    \begin{cases}
        1 - \frac{x(1-u_i)^2}{2(1-u_ix)} & x \leq 1 \\
        \frac{u_i}{2} - \frac{1}{2x} & 1 \leq x \leq \frac{1}{u_i} \\
        \frac{1}{x} & x \geq \frac{1}{u_i}
    \end{cases}.
\end{equation*}
To understand how to obtain the convex envelope $\psi_i^\dag$, the graph of $\psi_i$ is shown in Figure~\ref{fig:inv_func} for two different $u_i$, along with their convex envelopes. As can be observed from the graph, depending on $u_i$, there are two different versions of the convex envelope. The formulas are straightforward to derive. 
\begin{itemize}
    \item[i)] For $u_i\leq 2 - \sqrt{2}$ there is a linear part, then $u/2 + 1/(2x)$, another linear part and finally $1/x$. The equations are straightforward to derive and are given by
\begin{equation*}
    \frac{1}{w_i}  \psi_i^\dag(x) = 
    \begin{cases}
        1 - \frac{x}{2\alpha_i^2} & x \leq \alpha_i \\
        \frac{u}{2} + \frac{1}{2x} & \alpha_i \leq x \leq \beta_i \\
        \frac{2\gamma_i-x}{\gamma_i^2} & \beta_i \leq x \leq \gamma_i \\
        \frac{1}{x} & x \geq \gamma_i,
    \end{cases}
\end{equation*}
where
\begin{equation*}
    \alpha_i = \frac{2}{2-u_i}, \hspace{1cm} \beta_i = \frac{2(\sqrt{2}-1)}{u_i}, \hspace{1cm} \gamma_i = \frac{2(2-\sqrt{2})}{u_i}.
\end{equation*}
Note that for all $i$
\begin{equation*}
    1 \leq \alpha_i \leq \beta_i \leq \frac{1}{u_i} \leq \gamma_i. 
\end{equation*}
    \item[ii)] For $u_i \geq 2 - \sqrt{2}$ we get a linear part, followed by $1/x$:
\begin{equation*}
    \frac{1}{w_i}  \psi_i^\dag(x) = 
    \begin{cases}
        1 - \frac{x}{4}  & x \leq 2 \\
        \frac{1}{x} & x \geq 2
    \end{cases}.
\end{equation*}
\end{itemize}

\begin{figure}
    \centering
    \begin{subfigure}[b]{0.45\textwidth}
         \centering
         \includegraphics[width=\textwidth]{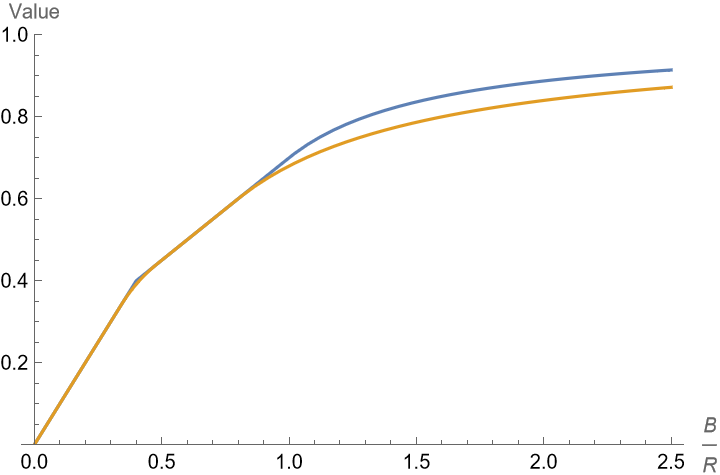}
         \caption{$\phi_i$ and $\phi_i^\dag$ for $u_i=\frac{2}{5}$}
     \end{subfigure}
     \hfill
     \begin{subfigure}[b]{0.45\textwidth}
         \centering
         \includegraphics[width=\textwidth]{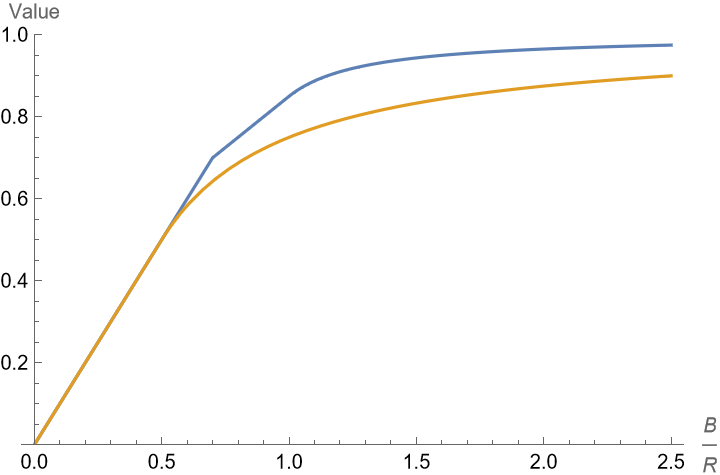}
         \caption{$\phi_i$ and $\phi_i^\dag$ for $u_i=\frac{7}{10}$}
     \end{subfigure}
    \caption{The function $\phi_i$ and the transformed convex envelope $\phi_i^\dag$ for two different $u_i$, both with $w_i = 1$.}
    \label{fig:newphi}
\end{figure}

\noindent
Now we transform this to $\phi_i^\dag$ by again considering $\psi_i^\dag(1/x)$. The two different versions are shown in Figure~\ref{fig:newphi}. The formulas are as follows. 
\begin{itemize}
    \item[i)] For $u_i\leq 2 - \sqrt{2}$
\begin{equation*}
    \frac{1}{w_i}  \phi_i^\dag(x) = 
    \begin{cases}
        x & x \leq \frac{1}{\gamma_i} \\
        \frac{2\gamma_ix-1}{\gamma_i^2x} & \frac{1}{\gamma_i} \leq x \leq \frac{1}{\beta_i} \\
        \frac{u}{2} + \frac{x}{2} & \frac{1}{\beta_i} \leq x \leq \frac{1}{\alpha_i} \\
        1 - \frac{1}{2\alpha_i^2x} & x \geq \frac{1}{\alpha_i} \\        
    \end{cases}
\end{equation*}
    \item[ii)] For $u_i \geq 2 - \sqrt{2}$ 
\begin{equation*}
    \frac{1}{w_i}  \phi_i^\dag(x) = 
    \begin{cases}
        x & x \leq \frac{1}{2}  \\
        1 - \frac{1}{4x}  & x \geq \frac{1}{2}
    \end{cases}.
\end{equation*}
\end{itemize}

This was all in case $u_i>0$. For $u_i = 0$, we see analogously that
\begin{equation*}
    \frac{1}{w_i} \phi_i(x) =
    \begin{cases}
        \frac{x}{2} & x \leq 1 \\
        1 - \frac{1}{2x} & x \geq 1
    \end{cases}
\end{equation*}
so
\begin{equation*}
    \frac{1}{w_i} \psi_i(x) =
    \begin{cases}
        1 - \frac{x}{2} & x \leq 1 \\
        \frac{1}{2x} & x \geq 1.
    \end{cases}
\end{equation*}
Note that this function is already convex. Therefore $\psi_i^\dag = \psi_i$ and $\phi_i^\dag = \phi_i$.\\
\\
By construction the $\phi_i^\dag$ are now reciprocally concave, i.e. $\phi_i^\dag$ is concave and $\phi_i^\dag(1/\cdot)$ is convex. We use these to define a new payoff function 
\begin{equation*}
    H^\dag(B_1,..,B_n,R_1,..,R_n) = \sum_{i=1}^n \phi_i^\dag\left(\frac{B_i}{R_i}\right).
\end{equation*}
Note that $H^\dag$ is concave in $B_1,..,B_n$ and convex in $R_1,..,R_n$. Also note that setting $\phi_i(B_i/0) = 1$ and $B_i\geq\epsilon>0$ ensures continuity of $H^\dag$ everywhere.

\subsubsection{Value of the dominated game}\label{subsubsec:domgame}
We now define a new game with payoff $H^\dag$, which has the required convex-concavity. First fix $B,R > 0$ and set
\begin{equation*}
    c_i = \frac{(\phi_i^\dag)'(B/R)}{\sum_{j=1}^n (\phi_j^\dag)'(B/R)}.
\end{equation*}
We fix $\epsilon$ so that $0 <\epsilon < B \min_i c_i$ and require that $B_i \geq \epsilon$ for all $i$. Note that this extra restriction can only lower the value for Blue, so the value of the resulting game is still a lower bound for the original game.\\
\\
All the $\phi_i^\dag$'s are $C^1$, so we can apply Lemma~\ref{lemma:minimax} to conclude that the unique Nash equilibrium is obtained by setting $B_i = c_i B$ and $R_i = c_i R $. Moreover, the game value is $\sum_i \phi_i^\dag(B/R)$. Note that by construction of $\epsilon$, the $B_i$ all satisfy the restriction $B_i \geq \epsilon$.\\ 
\\
To compute the $c_i$'s more explicitly, one needs the derivative of $\phi_i^\dag$, which is given below.
\begin{itemize}
    \item[i)] For $u_i = 0$
\begin{equation*}
    \frac{1}{w_i} (\phi_i^\dag)'(x) =
    \begin{cases}
        \frac{1}{2} & x \leq 1 \\
        \frac{1}{2x^2} & x \geq 1
    \end{cases}
\end{equation*}
    \item[ii)] For $0\leq u_i\leq 2 - \sqrt{2}$
\begin{equation*}
    \frac{1}{w_i}  (\phi_i^\dag)'(x) = 
    \begin{cases}
        1 & x \leq \frac{1}{\gamma_i} \\
        \frac{1}{\gamma_i^2x^2} & \frac{1}{\gamma_i} \leq x \leq \frac{1}{\beta_i} \\
        \frac{1}{2} & \frac{1}{\beta_i} \leq x \leq \frac{1}{\alpha_i} \\
        \frac{1}{2\alpha_i^2x^2} & x \geq \frac{1}{\alpha_i}     
    \end{cases}
\end{equation*}
    \item[iii)] For $u_i \geq 2 - \sqrt{2}$ 
\begin{equation*}
    \frac{1}{w_i}  (\phi_i^\dag)' = 
    \begin{cases}
        1 & x \leq \frac{1}{2}  \\
        \frac{1}{4x^2}  & x \geq \frac{1}{2}
    \end{cases}
\end{equation*}
\end{itemize}

\subsection{Value of the original game}
So far we have obtained an upper bound and a lower bound for the game. Now note that in some cases, the $c_i$'s as computed in Section~\ref{subsubsec:domgame} coincide with the ones that were used in Section~\ref{subsec:upper_bound}. Namely when for each $i$ one of the following conditions holds:
\begin{itemize}
    \item[i)] $u_i = 0$
    \item[ii)] $0\leq u_i\leq 2 - \sqrt{2}$ and either $$\frac{B}{R} \leq 1/\gamma_i = \frac{u_i}{2(2-\sqrt{2})}$$ or $$\frac{u_i}{2(\sqrt{2}-1)} = 1/\beta_i \leq \frac{B}{R} \leq 1/\alpha_i = 1 - u_i/2$$
    \item[iii)] $u_i \geq 2 - \sqrt{2}$ and $\tfrac{B}{R} \leq 1/2$
\end{itemize} 
Therefore, in these cases the upper and lower bound coincide. This implies that the value of the game equals 
\begin{equation*}
    \sum_{i=1}^n \phi_i\left(\frac{B}{R}\right),
\end{equation*}
since this is the upper bound we found. 
In particular, when all $u_i$'s are $0$, the value is 
\begin{equation*}
    \frac{1}{2} \frac{B}{R}.
\end{equation*}
We get back to this in Section~\ref{sec:interpretation}.
In the other cases the bounds do not coincide, so we only obtain a range of possible game values. Figure~\ref{fig:upper_lower} shows the lower and upper bounds for a fixed choice of $w_i, u_i$ as a function of $B/R$.\\

\begin{figure}[!ht]
    \centering
    \begin{subfigure}[b]{0.45\textwidth}
         \centering
         \includegraphics[width=\textwidth]{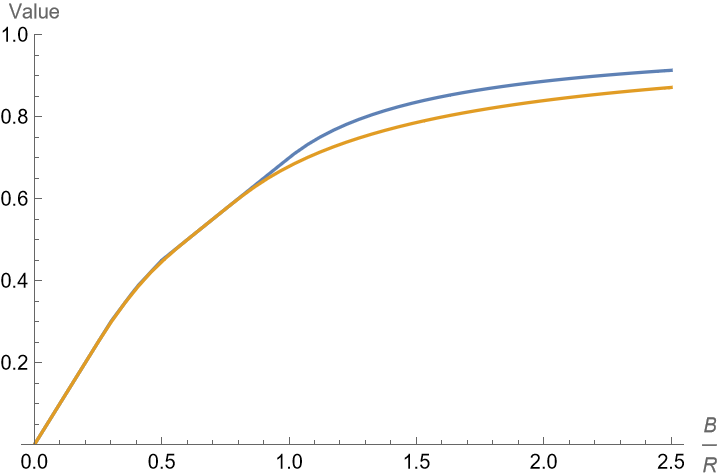}
         \caption{$u_1 = \frac{3}{10}, u_2 = \frac{4}{10}, u_3 = \frac{5}{10}$}
     \end{subfigure}
     \hfill
     \begin{subfigure}[b]{0.45\textwidth}
         \centering
         \includegraphics[width=\textwidth]{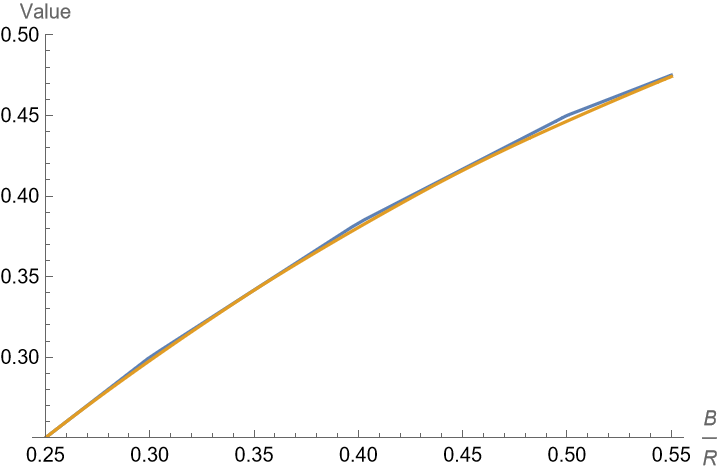}
         \caption{$u_1 = \frac{3}{10}, u_2 = \frac{4}{10}, u_3 = \frac{5}{10}$}
     \end{subfigure}\\
     \begin{subfigure}[b]{0.45\textwidth}
         \centering
         \includegraphics[width=\textwidth]{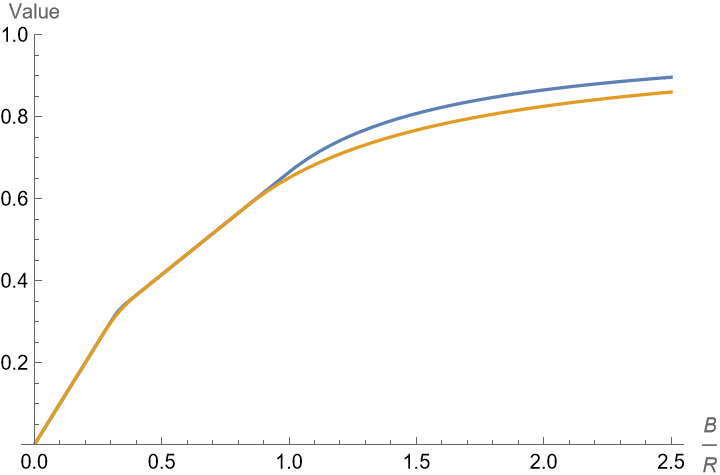}
         \caption{$u_1=0.31,u_2=0.33,u_3=0.35$}
     \end{subfigure}
     \hfill
     \begin{subfigure}[b]{0.45\textwidth}
         \centering
         \includegraphics[width=\textwidth]{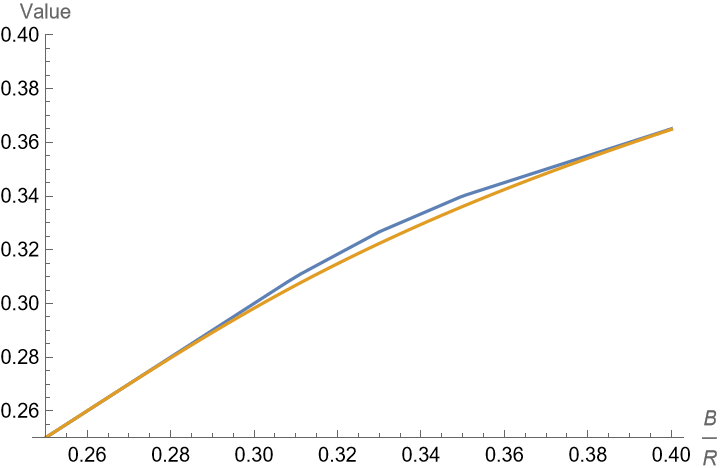}
         \caption{$u_1=0.31,u_2=0.33,u_3=0.35$}
     \end{subfigure}
    \caption{The value of the GL-MS with 3 fields, for different values of $u_i$. All three fields have value 1/3. The right figures are zoomed-in versions of the left figures.}
    \label{fig:upper_lower}
\end{figure}

\begin{remark}[Comparison with known results]
When all $u_i$'s are set to $0$, we obtain the following unique second-stage strategy: $B_i = w_i B$ and $R_i = w_i R$. Then on each field a single field General Lotto game is played with the same resource ratio $B_i/R_i = B/R$. The total expected payoff still equals $\tfrac{B}{2R}$ whenever $B\leq R$ and $1 - \frac{R}{2B}$ otherwise. This is (like in Remark~\ref{rmk:sf}) consistent with Corollary 1 of \citet{kovenock2021generalizations}. The difference between our case and theirs is that they treat the game as a one-shot game. Apparently changing the game to a multistage variant did not affect the value.
\end{remark}
\section{Measuring information versus strength} \label{sec:Measuring}
In the preceding sections, we have determined the optimal strategies and value for games with fixed amounts of resources and information, effectively addressing our first main research question. For this section we transition to our second main research question: The influence of information versus strength and the weapons mix problem. The weapons mix problem consists of choosing the optimal combination of information (scouting drones) and resources which provide strength (troops).
In this section we present three different ways to compare the influence of information against the influence of resources for $GL\text{-}S$, ending with the weapons mix problem. Two of these ways are similar to methods that were independently used in \cite{chandan2022strategic}.

\subsection{Influence ratio}
One way to measure the influence of information versus strength is to compare the increase in game value under the addition of information against the increase of the game value under the addition of resources. To this end we fix $R>0$ and compute for each pair of $B, u$ the ratio of derivatives of the game value with respect to $B$ and $u$, respectively. This yields the following:
\begin{eqnarray*}
    V_u &:=& \frac{\partial V}{\partial u} = 
    \begin{cases}
        0 & \frac{B}{R} < u \\
        \frac{1}{2} & u < \frac{B}{R} < 1 \\
        \frac{(1-u)(2B/R - 1- u)}{2\left(\tfrac{B}{R}-u\right)^2} & 1< \frac{B}{R}
    \end{cases}\\
    V_B &:=& \frac{\partial V}{\partial B} = 
    \begin{cases}
        \frac{1}{R} & \frac{B}{R} < u \\
        \frac{1}{2R} & u < \frac{B}{R} < 1 \\
        \frac{(1-u)^2}{2R(B/R-u)^2} & 1< \frac{B}{R}
    \end{cases}
\end{eqnarray*}
This allows us to define the influence ratio IR:
\begin{eqnarray*}
    IR &:=& \frac{1}{R} \frac{V_u}{V_B} = 
    \begin{cases}
        0 & \frac{B}{R} < u \\
        1 & u < \frac{B}{R} < 1 \\
        1 + \frac{2}{1-u} \left(\tfrac{B}{R}-1\right) & 1< \frac{B}{R}
    \end{cases}
\end{eqnarray*}
The graph of the influence ratio is shown for $R=1$ in Figure~\ref{fig:logratio}. The diagonal line $u=B$ between blue and purple marks the `Exploitation Line', i.e. under this line additional information is valuable because it can be used, but above this line there are simply not enough resources. Using this, the graph can be divided into 3 parts. Directly above the exploitation line we have the upper triangular region, where the influence ratio is 0, which means adding information does not increase the value of the game. Directly below the exploitation line is the lower triangular region, where the influence ratio is exactly 1, which means everywhere in the lower triangle, increasing either $u$ or $B$ by the same amount, leads to the same increase in value. And on the right side we have the right rectangle, where the influence ratio is larger than 1, which means the value increases more for an increase in $u$ than for the same increase in $B$. Note the interesting behaviour around the point $(1,1)$. The reason is that for $B<1$, going from information $u=1-\epsilon$ to $u=1$ is not valuable at all, because there are not enough resources to use the information. However, for $B>1$ the increase of information from $1-\epsilon$ to $1$ is extremely valuable. The reason for this is that when $u=1-\epsilon$ adding resources will never yield a value of $1$, where adding just the final $\epsilon$ bit of information does yield game value $1$.

\begin{figure}[ht]
    \centering
    \includegraphics[width=1.2\textwidth]{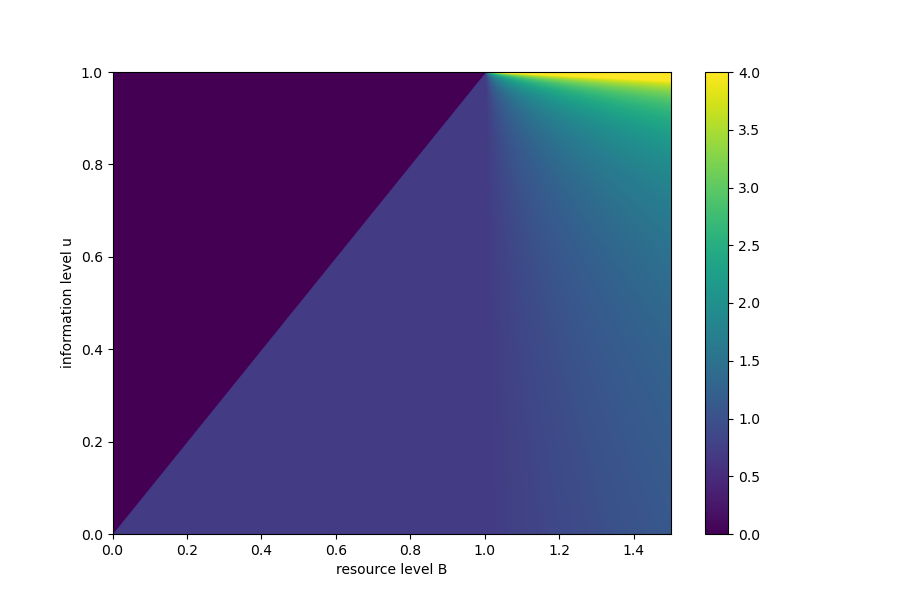}
    \caption{Heatmap of the Influence Ratio, as a function of resources $B$ and information $u$. A higher influence ratio indicates that information is more valuable.}
    \label{fig:logratio}
\end{figure}

\subsection{Resources needed to attain a given game value}
Another way to quantify the influence of information is by studying how many resources are needed to obtain a certain game value given a level of information. The resource ratio $B/R$ that is needed to attain game value $v$ given information $u$ equals
\begin{equation*}
    \frac{B}{R} = 
    \begin{cases}
        u + \frac{(1-u)^2}{2(1-v)} & u \leq 2v - 1 \\
        2v - u & 2v-1 \leq u \leq v \\
        v & u \geq v
    \end{cases}
\end{equation*}
Note that for any given fixed value $v$, the equation above exactly represents the contour line of the value function for this particular value $v$. In Figure~\ref{fig:contour}, we display several of these contour lines.
This visualization also provides insight into how to maximize the game value with minimal additional information or resource allocation. Specifically, this can be achieved by moving perpendicular to the contour lines.
Notably, in the top-left region, we observe once again that adding information does not enhance the game value; instead, only the addition of resources can lead to an increase.

\begin{figure}[ht]
    \centering
    \includegraphics[width=\textwidth]{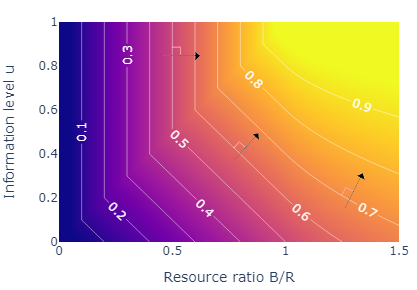}
    \caption{Contour plot of the value of $GL\text{-}S$ as a function of $B/R$ and $u$. The arrows indicate the direction in which the value increases most steeply.}
    \label{fig:contour}
\end{figure}

\subsection{Optimising information vs strength}\label{subsec:budget}
In this section, we delve into the weapons mix problem, which revolves around a fixed budget $D$ to be divided between resources and information. We assume $R=1$ and set the cost of resources $B$ to $1$ per unit and the cost of information at $c$ per unit. This yields the restriction that $B + cu \leq D$. Using the solution found to $GL\text{-}S$ in Section \ref{sec:singlefield}, we can determine the $B$ and $u$ that maximise the game value under this constraint, thereby identifying the optimal allocation of the budget between resources and information.\\
\\
Varying the total budget $D$ produces the graphs in Figure~\ref{fig:budget}. The line corresponding to $c=100$ was included to illustrate the case in which information is so expensive that it is not used at all, resembling the `no-information case'.\\
\\
The top-left graph of Figure~\ref{fig:budget} shows the maximum attainable value for a budget $D$. It's evident that a higher cost $c$ of information always leads to a lower game value.\\
\\
Moving to the top-right graph, we see that for $c<1$ it is optimal for any budget $D$ to invest in both resources and information, with both steadily increasing for higher budgets until a game value of $1$ is reached. For $c>1$ it is better to only invest in resources for when the budget is small. However, if we increase the budget, there is a certain point after which it also becomes worthwhile to invest in information as well. Interestingly, from that point on Blue will actually start decreasing the amount spent on resources, as his investment in information increases faster than the increase in budget. The decrease in resources bought can be seen in the bottom-left graph.\\
\\
Finally, note from the top-left graph that a budget $D\geq 1+c$ ensures the the game value is $1$. This outcome is understandable since $c$ can be spent on buying information, thereby acquiring perfect information $u=1$. With perfect information, Blue only needs to spend $R=1$ on resources to always match Red and thus always win. Since a budget of $1+c$ guarantees Blue always wins, the surplus budget need not be used. The unused budget is shown in the bottom-right graph.

\begin{figure}[!ht]
    \centering
    \includegraphics[width=\textwidth]{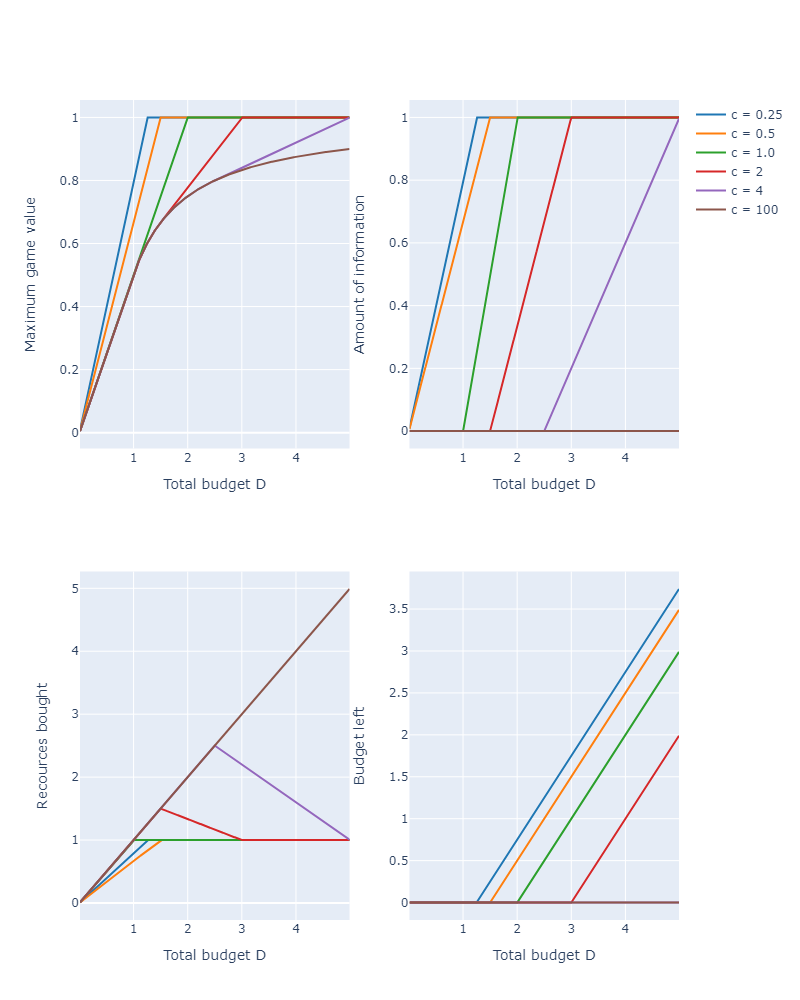}
    \caption{Let $D$ be the fixed budget to be spent between information and resources. Here we plot the characteristics as function of the total budget $D$, for several different costs of information $c$.}
    \label{fig:budget}
\end{figure}
\section{Interpretation}\label{sec:interpretation}
In the previous sections we have tackled the two main research questions outlined at the start of this paper: the weapons mix problem and efficient strategies when using scouts. Now, our focus shifts towards interpreting these questions and their respective solutions within a military context, particularly concerning the scenario of troops providing firepower and scouting drones providing information. In this section we outline general guidelines for optimal strategies, the most effective combination of troops and scouting drones, and whether to prioritise additional troops or additional scouting drones. Underlying these guidelines are the insights obtained in various parts of the paper. We use the strategies and values calculated for the $GL\text{-}S$ model and employ the three distinct ways to compare information and strength within these models.\\
\\

\subsection{Efficient strategies with scouts}
We start with guidelines for efficient strategies when the number of resources and the amount of information are predetermined. The situation we analyse is as follows: Red and Blue will face each other in a number of battles. Both have a given amount of resources and additionally Blue has a given number of scouts, which roughly translates to a probability to gain information about Red's allocation of resources.\\ 
\\
Blue and Red have different challenges in formulating their strategies.
For Blue, the strategy can be split into two sub-strategies: what to do if Blue gains information, and what to do if Blue doesn't gain information. How does Blue best convert his scouting advantage into the battle?
Red aims to provide a strategy that is as scout-resistant as possible, recognising the possibility of her allocation to be scouted to Blue. Is this achieved by playing more stable and always sending the same number of resources, which causes Blue to learn nothing extra when he gains information? Or is this done by playing more risky and pooling a large part of resources together for a large attack, hoping that this one is not revealed? \\
\\
The answer to all of these questions depends on the probability of Blue gaining information and the resource ratio $B/R$. There are three distinct cases, each with its own set of strategies for Blue and Red. We discuss these cases separately and provide insights based on Theorem \ref{thm:GL-S}, which describes an optimal solution to $GL\text{-}S$.
%
%

\subsubsection*{Red heavily outnumbers Blue, who has many scouts \texorpdfstring{$(B/R<u)$}{B/R<u}}
In this case Blue's optimal strategy involves only allocating resources when information about Red's allocation is obtained, conserving resources otherwise. Blue does not have enough resources to match Red every time the field is revealed, but should try to match as often as possible. 
Red's best response is to always allocate the exact same number of resources, thus neutralising the advantage Blue has when he gets information about Red's allocation.

\subsubsection*{Red outnumbers Blue, who has few scouts \texorpdfstring{$(u<B/R<1)$}{u<B/R<1}}
Here Red uses the same strategy as she would in a standard General Lotto Game, always allocating resources but randomising the exact amount slightly. Blue can match Red’s allocation whenever it is revealed and still have resources left in his resource budget. Therefore, Blue should also allocate resources occasionally when he has no information about Red's allocation. If Blue decides to allocate resources, he should follow Red’s strategy for determining the exact number. 

\subsubsection*{Blue outnumbers Red \texorpdfstring{$(1<B/R)$}{1<B/R}}
Now Blue can match Red’s allocation whenever it is revealed and additionally allocate resources every time it is not revealed. The exact amount should be randomised slightly. For Red, it is best to allocate resources even less often than in a standard General Lotto game. This means that Red should often not allocate any resources, and occasionally allocate a large number, while still randomising the exact amount. The level of risk taken by Red increases with both the ratio with which she is outnumbered and how often Blue gains information on her resource allocation. If Red decides to allocate resources, she should follow Blue’s strategy for determining the exact number.

\subsection{Weapons mix problem}
Now that we have provided guidelines for efficient strategies given fixed amounts of resources and information, we discuss guidelines for when the amounts of resources and information are not fixed, and can instead be chosen under a fixed budget. First we first discuss the influence of information and resources, respectively, and how they relate to military practice and experience. Then we consider the budget constraints, thereby addressing the weapons mix problem.

\subsubsection*{The value of information} 
To analyse the value of information, recall Figure~\ref{fig:Lotto}. Two modes of behaviour exist, depending on whether Blue has more $(B/R > 1)$ or less $(B/R<1)$ resources than Red. In the first case, an increase in information always leads to to an increase of value. 
However, in the second case, there is a cutoff point: initially, extra information is beneficial, but beyond the cutoff point, there is no additional gain. 
The main takeaway from these considerations is that information only provides value as long as there are enough resources in strength available to take advantage of this information.\\
\\
We see something similar in military practice, for instance in the war following the Russian invasion of Ukraine. Ukraine has a great intelligence position due to help from many allies. However, at some point more intelligence does not yield any better results. Knowing exactly what the opponent does is not helpful when there are no resources to counter them.

\subsubsection*{The value of resources}
For a deeper analysis of the value of resources, we turn to Figure~\ref{fig:single_field_value}. Again, there is a clear break point. When $B/R$ is smaller than $u$, all of the Blue forces are used to counter Red forces when they are observed, hence they are always employed effectively. However, when $B/R$ is larger than $u$, some of the Blue resources are used in situations where Red is not observed. These Blue resources are used less effectively. This results in a smaller increase in game value after the break point.\\
\\
Similarly, in military practice, deploying troops is much more effective when the strength of the opponent in the area is known. As soon as all available information on the opponent has been exploited, any new troops are used in more uncertain situations and are therefore less effective. We find this reflected in the Russian-Ukrainian war, where Russia has many more troops and weaponry to deploy but has an inferior intelligence position. This makes the deployment of Russian troops less effective.

\subsubsection*{Budget constraints}
Finally we arrive at our second research question: how to optimally distribute a military budget over resources and information. We studied this question most directly in Section~\ref{subsec:budget}, of which the main results are shown in Figure~\ref{fig:budget}. From this figure, we can extract three regimes, which we show in Table~\ref{tab:scenarios}. 
\begin{itemize}
    \item[(I)] In this case information is cheap compared to resources. This corresponds to $c<1$ in Figure~\ref{fig:budget}. In this case it is directly worthwhile to buy information, so as budget increases resources and information are bought equally.
    \item[(II)] Here information is expensive (so $c>1$) and there is a low budget. In this case it is not worthwhile to invest in information and it is best to spend all of the budget on resources.
    \item[(III)] As in case (II) information is expensive (so $c>1$), but there is also a high budget. Now it is again worthwhile to buy information. In fact, with a higher budget some resources are exchanged for more information (as one can tell from the decreasing lines in the bottom-left graph of Figure~\ref{fig:budget}).
\end{itemize}
The main takeaway here is that if information is cheap, you should always buy it, along with resources. If information is expensive, you should focus on resources first and only buy information if your budget is high enough.

\begin{table}[ht]
\centering
\begin{tabular}{ll|ll}
\multicolumn{2}{l|}{\multirow{2}{*}{}} & \multicolumn{2}{l}{\hspace{1.2cm}Information}                                           \\
\multicolumn{2}{l|}{}                         & \multicolumn{1}{l|}{\textbf{Cheap}}                        & \textbf{Expensive}                  \\ \hline
\multirow{2}{*}{Budget}         & \textbf{Low}         & \multicolumn{1}{l|}{\multirow{2}{*}{(I) Combine}} & (II) Only resources           \\ 
                                & \textbf{High}        & \multicolumn{1}{l|}{}                             & (III) Focus on information
\end{tabular}
\caption{Preferred strategy in cases of high and low budget and with cheap and expensive information. Derived from Figure~\ref{fig:budget}.}
\label{tab:scenarios}
\end{table}

\section{Discussion} \label{sec:conclusion}
In this paper, we introduced the General Lotto Game with Scouts, a resource allocation game enhanced with the possibility of gaining information on the opposing player though scouting. We provided a complete solution for the game concerning a single field. Subsequently, we extended our analysis to a multistage version, establishing upper and lower bounds for its value and determining conditions on when they coincide. Additionally, we introduced several metrics to quantify the trade-off between information and resources. Finally, we interpreted these findings and drew qualitative conclusions, summarised as follows:
\begin{itemize}
    \item Information is only useful when there are enough resources to exploit it.
    \item Resources are employed more efficiently in the presence of information.
    \item When information is cheap, one should always buy it alongside resources. When it is expensive, one should first focus on buying resources and only invest in information if the budget is high enough.
    \item When Blue receives information, it is always best to call as often as possible.
    \item Red should play very stable if she heavily outnumbers Blue, and very risky if Blue heavily outnumbers her.
\end{itemize}
In this section we discuss the model and results. We conclude with some suggestions for future work.

\subsection{Discussion of the model and results}

\subsubsection*{Lotto constraint before or after observing Red's resource allocation}
The characterising feature of a General Lotto model is the constraint that the total allocated budget should equal a fixed constant in expectation.
Since our studied situation involves potential information revelation, one faces a decision when this constraint should be applied: one can require the expected total resources to equal a constant before or after potentially revealing Red's resource allocation.\\
\\
The second option might seem more natural, but has certain curiosities in the General Lotto framework. For instance: whenever Red allocates less resources than Blue's total budget, she always loses. Therefore Red will likely alternate between allocating nothing and allocating significantly more than Blue's budget. Additionally, Blue has only one sensible response when Red's allocation is revealed, namely matching Red with the highest possible probability. \\
\\
In this paper, we chose for the former definition: requiring the budget constraint before the allocation is revealed. This allows Blue to decide how much of the budget he wants to allocate in the information and no-information case, which in turn allows us to study where he can use his budget most effectively.
The only drawback is that since Blue's allocation depends on Red's, his allocation distributions depends on Red's allocation distribution.


\subsubsection*{Constant price of information}
In Section~\ref{subsec:budget}, we examined a fixed budget $D$ to be divided between resources and information. We assigned information a fixed cost of $c$ per unit. However, this straightforward approach implies that the cost of increasing information from 0 to 0.1 is identical to the cost of increasing from 0.9 to 1. In practical scenarios, obtaining basic information about a situation is often much cheaper than acquiring the final details.\\
\\
This choice for a fixed cost model significantly impacts the perceived value of information in comparison to resources. In particular in the top-right part of Figure~\ref{fig:logratio}, the value of information is much larger than the value of resources. 
This large difference stems from the fact that even a marginal increase in information can elevate the game value to 1, a feat unachievable solely through an increase in resources.\\
\\
To better reflect real-world conditions, one could consider assigning information a non-linear cost.
For instance, introducing a cost function such as $F(u) = 1/(1-u)-1$ would result in the cost of acquiring information increasing with each unit purchased. 
Consequently, perfect information becomes impossibly expensive to attain, thereby imposing a practical limit on information acquisition.

\subsection{Future work}
We conclude this work with potential avenues for extending our research beyond its current scope. While we have already mentioned adjusting the Lotto constraint and the introduction of a non-linear price of information, there are several additional options for future research.\\
\\
First, one could consider more complicated scenarios for the single field case. For instance, one could investigate scenarios where the information obtained by Blue is partial, perhaps by revealing each of Red’s units independently with certain probabilities. Another intriguing possibility is to introduce false signals, complicating the accuracy of the information gained by Blue and adding an element of uncertainty.\\
\\
Second, it could be interesting to study extensions to the multistage version of the game. For example, one could explore allowing Blue to optimise the detection probabilities $u_1,..,u_n$ under a fixed budget constraint. Additionally, each field could have a `visibility' parameter, which influences how hard it is for Blue to obtain information, resulting in varying information costs per field. Another option is to give Blue a fixed amount of scouts, which he must then distribute strategically among the fields. I.e. if blue has only two scouts, only the two fields he sends these to could be revealed. These extensions might give a `search game dynamic' where a trade-off will be sought between allocating resources and search capacity to more and less visible fields.

\bibliography{refs.bib}


\begin{thebibliography}{18}
\ifx \bisbn   \undefined \def \bisbn  #1{ISBN #1}\fi
\ifx \binits  \undefined \def \binits#1{#1}\fi
\ifx \bauthor  \undefined \def \bauthor#1{#1}\fi
\ifx \batitle  \undefined \def \batitle#1{#1}\fi
\ifx \bjtitle  \undefined \def \bjtitle#1{#1}\fi
\ifx \bvolume  \undefined \def \bvolume#1{\textbf{#1}}\fi
\ifx \byear  \undefined \def \byear#1{#1}\fi
\ifx \bissue  \undefined \def \bissue#1{#1}\fi
\ifx \bfpage  \undefined \def \bfpage#1{#1}\fi
\ifx \blpage  \undefined \def \blpage #1{#1}\fi
\ifx \burl  \undefined \def \burl#1{\textsf{#1}}\fi
\ifx \doiurl  \undefined \def \doiurl#1{\url{https://doi.org/#1}}\fi
\ifx \betal  \undefined \def \betal{\textit{et al.}}\fi
\ifx \binstitute  \undefined \def \binstitute#1{#1}\fi
\ifx \binstitutionaled  \undefined \def \binstitutionaled#1{#1}\fi
\ifx \bctitle  \undefined \def \bctitle#1{#1}\fi
\ifx \beditor  \undefined \def \beditor#1{#1}\fi
\ifx \bpublisher  \undefined \def \bpublisher#1{#1}\fi
\ifx \bbtitle  \undefined \def \bbtitle#1{#1}\fi
\ifx \bedition  \undefined \def \bedition#1{#1}\fi
\ifx \bseriesno  \undefined \def \bseriesno#1{#1}\fi
\ifx \blocation  \undefined \def \blocation#1{#1}\fi
\ifx \bsertitle  \undefined \def \bsertitle#1{#1}\fi
\ifx \bsnm \undefined \def \bsnm#1{#1}\fi
\ifx \bsuffix \undefined \def \bsuffix#1{#1}\fi
\ifx \bparticle \undefined \def \bparticle#1{#1}\fi
\ifx \barticle \undefined \def \barticle#1{#1}\fi
\bibcommenthead
\ifx \bconfdate \undefined \def \bconfdate #1{#1}\fi
\ifx \botherref \undefined \def \botherref #1{#1}\fi
\ifx \url \undefined \def \url#1{\textsf{#1}}\fi
\ifx \bchapter \undefined \def \bchapter#1{#1}\fi
\ifx \bbook \undefined \def \bbook#1{#1}\fi
\ifx \bcomment \undefined \def \bcomment#1{#1}\fi
\ifx \oauthor \undefined \def \oauthor#1{#1}\fi
\ifx \citeauthoryear \undefined \def \citeauthoryear#1{#1}\fi
\ifx \endbibitem  \undefined \def \endbibitem {}\fi
\ifx \bconflocation  \undefined \def \bconflocation#1{#1}\fi
\ifx \arxivurl  \undefined \def \arxivurl#1{\textsf{#1}}\fi
\csname PreBibitemsHook\endcsname

\bibitem[\protect\citeauthoryear{Borel}{1921}]{borel1921theorie}
\begin{barticle}
\bauthor{\bsnm{Borel}, \binits{E.}}:
\batitle{La th{\'e}orie du jeu et les {\'e}quations int{\'e}gralesa noyau
  sym{\'e}trique}.
\bjtitle{Comptes rendus de l’Acad{\'e}mie des Sciences}
\bvolume{173}(\bissue{1304-1308}),
\bfpage{58}
(\byear{1921})
\end{barticle}
\endbibitem

\bibitem[\protect\citeauthoryear{Eckler and
  Burr}{1972}]{eckler1972mathematical}
\begin{bbook}
\bauthor{\bsnm{Eckler}, \binits{A.R.}},
\bauthor{\bsnm{Burr}, \binits{S.A.}}:
\bbtitle{Mathematical Models of Target Coverage and Missile Allocation}.
\bpublisher{Military Operations Research Society Alexandria, VA},
\blocation{-}
(\byear{1972})
\end{bbook}
\endbibitem

\bibitem[\protect\citeauthoryear{Washburn}{2013}]{washburn2013or}
\begin{barticle}
\bauthor{\bsnm{Washburn}, \binits{A.}}:
\batitle{Or forum—blotto politics}.
\bjtitle{Operations Research}
\bvolume{61}(\bissue{3}),
\bfpage{532}--\blpage{543}
(\byear{2013})
\end{barticle}
\endbibitem

\bibitem[\protect\citeauthoryear{Laslier and
  Picard}{2002}]{laslier2002distributive}
\begin{barticle}
\bauthor{\bsnm{Laslier}, \binits{J.-F.}},
\bauthor{\bsnm{Picard}, \binits{N.}}:
\batitle{Distributive politics and electoral competition}.
\bjtitle{Journal of Economic Theory}
\bvolume{103}(\bissue{1}),
\bfpage{106}--\blpage{130}
(\byear{2002})
\end{barticle}
\endbibitem

\bibitem[\protect\citeauthoryear{Gross and Wagner}{1950}]{gross1950continuous}
\begin{bbook}
\bauthor{\bsnm{Gross}, \binits{O.}},
\bauthor{\bsnm{Wagner}, \binits{R.}}:
\bbtitle{A Continuous Colonel Blotto Game}.
\bpublisher{Rand Corporation},
\blocation{-}
(\byear{1950})
\end{bbook}
\endbibitem

\bibitem[\protect\citeauthoryear{Roberson}{2006}]{roberson2006colonel}
\begin{barticle}
\bauthor{\bsnm{Roberson}, \binits{B.}}:
\batitle{{The {C}olonel {B}lotto game}}.
\bjtitle{Economic Theory}
\bvolume{29}(\bissue{1}),
\bfpage{1}--\blpage{24}
(\byear{2006})
\end{barticle}
\endbibitem

\bibitem[\protect\citeauthoryear{Behnezhad et~al.}{2017}]{behnezhad2017faster}
\begin{bchapter}
\bauthor{\bsnm{Behnezhad}, \binits{S.}},
\bauthor{\bsnm{Dehghani}, \binits{S.}},
\bauthor{\bsnm{Derakhshan}, \binits{M.}},
\bauthor{\bsnm{HajiAghayi}, \binits{M.}},
\bauthor{\bsnm{Seddighin}, \binits{S.}}:
\bctitle{Faster and simpler algorithm for optimal strategies of blotto game}.
In: \bbtitle{Proceedings of the AAAI Conference on Artificial Intelligence},
vol. \bseriesno{31}
(\byear{2017})
\end{bchapter}
\endbibitem

\bibitem[\protect\citeauthoryear{Liang et~al.}{2023}]{liang2023discrete}
\begin{botherref}
\oauthor{\bsnm{Liang}, \binits{D.}},
\oauthor{\bsnm{Wang}, \binits{Y.}},
\oauthor{\bsnm{Cao}, \binits{Z.}},
\oauthor{\bsnm{Yang}, \binits{X.}}:
Discrete colonel blotto games with two battlefields.
International Journal of Game Theory,
1--41
(2023)
\end{botherref}
\endbibitem

\bibitem[\protect\citeauthoryear{Hart}{2008}]{hart2008discrete}
\begin{barticle}
\bauthor{\bsnm{Hart}, \binits{S.}}:
\batitle{{Discrete {C}olonel {B}lotto and {G}eneral {L}otto games}}.
\bjtitle{International Journal of Game Theory}
\bvolume{36}(\bissue{3}),
\bfpage{441}--\blpage{460}
(\byear{2008})
\end{barticle}
\endbibitem

\bibitem[\protect\citeauthoryear{Gady}{2023}]{gady2023how}
\begin{botherref}
\oauthor{\bsnm{Gady}, \binits{F.-S.}}:
How an army of drones changed the battlefield in ukraine.
Foreign Policy
(2023)
\end{botherref}
\endbibitem

\bibitem[\protect\citeauthoryear{Jaiswal}{2012}]{jaiswal2012military}
\begin{bbook}
\bauthor{\bsnm{Jaiswal}, \binits{N.K.}}:
\bbtitle{Military Operations Research: Quantitative Decision Making}
vol. \bseriesno{5}.
\bpublisher{Springer},
\blocation{-}
(\byear{2012})
\end{bbook}
\endbibitem

\bibitem[\protect\citeauthoryear{Kovenock and
  Roberson}{2021}]{kovenock2021generalizations}
\begin{barticle}
\bauthor{\bsnm{Kovenock}, \binits{D.}},
\bauthor{\bsnm{Roberson}, \binits{B.}}:
\batitle{Generalizations of the {G}eneral {L}otto and {C}olonel {B}lotto
  games}.
\bjtitle{Economic Theory}
\bvolume{71},
\bfpage{997}--\blpage{1032}
(\byear{2021})
\end{barticle}
\endbibitem

\bibitem[\protect\citeauthoryear{Paarporn
  et~al.}{2022}]{paarporn2022asymmetric}
\begin{barticle}
\bauthor{\bsnm{Paarporn}, \binits{K.}},
\bauthor{\bsnm{Chandan}, \binits{R.}},
\bauthor{\bsnm{Alizadeh}, \binits{M.}},
\bauthor{\bsnm{Marden}, \binits{J.R.}}:
\batitle{Asymmetric battlefield uncertainty in {G}eneral {L}otto games}.
\bjtitle{IEEE Control Systems Letters}
\bvolume{6},
\bfpage{2822}--\blpage{2827}
(\byear{2022})
\end{barticle}
\endbibitem

\bibitem[\protect\citeauthoryear{Vu and Loiseau}{2021}]{vu2021colonel}
\begin{bchapter}
\bauthor{\bsnm{Vu}, \binits{D.Q.}},
\bauthor{\bsnm{Loiseau}, \binits{P.}}:
\bctitle{Colonel {B}lotto games with favoritism: Competitions with
  pre-allocations and asymmetric effectiveness}.
In: \bbtitle{Proceedings of the 22nd ACM Conference on Economics and
  Computation},
pp. \bfpage{862}--\blpage{863}
(\byear{2021})
\end{bchapter}
\endbibitem

\bibitem[\protect\citeauthoryear{Chandan et~al.}{2022}]{chandan2022strategic}
\begin{bchapter}
\bauthor{\bsnm{Chandan}, \binits{R.}},
\bauthor{\bsnm{Paarporn}, \binits{K.}},
\bauthor{\bsnm{Alizadeh}, \binits{M.}},
\bauthor{\bsnm{Marden}, \binits{J.R.}}:
\bctitle{Strategic investments in multi-stage {G}eneral {L}otto games}.
In: \bbtitle{2022 IEEE 61st Conference on Decision and Control (CDC)},
pp. \bfpage{4444}--\blpage{4448}
(\byear{2022}).
\bcomment{IEEE}
\end{bchapter}
\endbibitem

\bibitem[\protect\citeauthoryear{Alpern and Howard}{2017}]{alpern2017winner}
\begin{barticle}
\bauthor{\bsnm{Alpern}, \binits{S.}},
\bauthor{\bsnm{Howard}, \binits{J.}}:
\batitle{Winner-take-all games: The strategic optimisation of rank}.
\bjtitle{Operations Research}
\bvolume{65}(\bissue{5}),
\bfpage{1165}--\blpage{1176}
(\byear{2017})
\end{barticle}
\endbibitem

\bibitem[\protect\citeauthoryear{Sahuguet and
  Persico}{2006}]{sahuguet2006campaign}
\begin{barticle}
\bauthor{\bsnm{Sahuguet}, \binits{N.}},
\bauthor{\bsnm{Persico}, \binits{N.}}:
\batitle{Campaign spending regulation in a model of redistributive politics}.
\bjtitle{Economic Theory}
\bvolume{28},
\bfpage{95}--\blpage{124}
(\byear{2006})
\end{barticle}
\endbibitem

\bibitem[\protect\citeauthoryear{Glicksberg}{1952}]{glicksberg1952further}
\begin{barticle}
\bauthor{\bsnm{Glicksberg}, \binits{I.L.}}:
\batitle{A further generalization of the kakutani fixed theorem, with
  application to nash equilibrium points}.
\bjtitle{Proceedings of the American Mathematical Society}
\bvolume{3}(\bissue{1}),
\bfpage{170}--\blpage{174}
(\byear{1952})
\end{barticle}
\endbibitem

\end{thebibliography}

\end{document}